\documentclass[11pt,reqno]{amsart}

\usepackage[margin=1in]{geometry}

\usepackage{amssymb,mathtools}
\usepackage{graphicx}
\usepackage{bm}
\usepackage{xcolor}
\usepackage{cite}
\usepackage{hyperref}
\hypersetup{colorlinks=true,
            linkcolor=blue!55!black,
            citecolor=blue!55!black,
            urlcolor=blue!55!black}
\usepackage[capitalize,nameinlink]{cleveref}

\theoremstyle{plain}
\newtheorem{theorem}{Theorem}[section]
\newtheorem{lemma}[theorem]{Lemma}
\newtheorem{proposition}[theorem]{Proposition}
\newtheorem{corollary}[theorem]{Corollary}

\theoremstyle{definition}
\newtheorem{definition}[theorem]{Definition}

\theoremstyle{remark}
\newtheorem{remark}[theorem]{Remark}

\numberwithin{equation}{section}

\AtBeginDocument{\bibliographystyle{unsrteprint}}

\usepackage{enumitem}
\usepackage[expansion=false]{microtype}
\hypersetup{
  pdftitle={Exact Chained-Word Threshold and Monotone-Path Structure in the Even Root-of-Unity Kaleidoscope Yang--Baxter Algebra},
  pdfauthor={Qihang Wang and Zhiyuan Yao},
  pdfsubject={A common-flag proof, complete sharp-word classification, full radical filtration, and arbitrary-field filtered extension},
  pdfkeywords={Yang--Baxter equation, quantum torus, Jacobson radical, Gabriel quiver, Loewy layers, root of unity}
}

\newcommand{\CC}{\mathbb{C}}
\newcommand{\ZZ}{\mathbb{Z}}
\newcommand{\cK}{\mathcal{K}}
\newcommand{\Id}{\mathrm{I}}
\newcommand{\Span}{\operatorname{span}}

\newcommand{\wtau}{\tau}
\newcommand{\Wword}{\mathcal{W}}
\newcommand{\Dop}{\mathcal{D}}

\title[Threshold and monotone-path structure]{Exact chained-word threshold and
monotone-path structure in the even root-of-unity Kaleidoscope Yang--Baxter
algebra}
\author{Qihang Wang}
\address{School of Mathematical Sciences, Peking University,
Beijing 100871, China}
\author{Zhiyuan Yao}
\address{Lanzhou Center for Theoretical Physics,\hfil\break
Key Laboratory of Theoretical Physics of Gansu Province,\hfil\break
Key Laboratory of Quantum Theory and Applications of MoE,\hfil\break
Gansu Provincial Research Center for Basic Disciplines of Quantum Physics,
Lanzhou University, Lanzhou 730000, China}
\email{yaozy@lzu.edu.cn}
\thanks{Corresponding author: Zhiyuan Yao.}
\date{}

\subjclass[2020]{16G20, 16T25, 81R12, 16G60}
\keywords{Yang--Baxter equation, quantum torus, Jacobson radical,
Gabriel quiver, Loewy layers, root of unity}

\begin{document}

\begin{abstract}
The Kaleidoscope Yang--Baxter equation of Qiu, Guan, and Yu (2026) is the
consistency condition of multiple scattering in Gaudin's kaleidoscope models.
At an even order $N$ it involves two matrices: a
shift, and a square-zero matrix with one complex parameter.  They conjectured that every chained word, the square-zero matrix alternating
with integer powers of the shift, vanishes once the number of powers reaches
$N/2$, verified through order ten.  We prove it for every even order and
every parameter where the matrices exist.  The bound is sharp.  One step below it, a word is nonzero exactly when no
factor's power times its position is divisible by half the order, and then has rank two.  The proof rests on
one flag that every factor lowers by a step.  The algebra
the two matrices generate is a monotone-path algebra: its basis paths
multiply only while their level keeps direction.  It is independent of the parameter and has dimension $N^2-N+2$.  Its
radical, generated by the square-zero matrix, is nilpotent of index one more
than half the order, and its representation type is infinite.  Four conditions over an arbitrary
field reproduce the threshold and the radical.  None can be dropped, and the
root-of-unity model is one instance.
\end{abstract}

\maketitle

\section{Introduction}

The Yang--Baxter equation began in the Bethe wave functions of
one-dimensional many-body problems
\cite{Bethe_1931zt,McGuire_1964so,Gaudin_1967us,Sutherland_1968fr}, as the
condition, isolated by Yang, that their two-body scattering factorize
consistently \cite{Yang_1967se,Yang_1968mx}.  Baxter reached it again from commuting
transfer matrices \cite{Baxter_1972pf,Baxter_Book}, and in two-dimensional field theory
it became the factorization condition for $S$-matrices
\cite{Zamolodchikov_1979fs,Jimbo_1991it}.  Read as the defining relation of an algebraic object,
it produced quantum groups
\cite{Drinfeld_1985ha,Drinfeld_1986qg,Jimbo_1985ad,Sklyanin_1982sa,Faddeev_1988qo}.  The identity treated
here comes from a more recent place, scattering in a kaleidoscope.  Gaudin
read a Bethe wave function as waves reflected in the mirrors of a finite
reflection group \cite{Gaudin2014Bethe,Coxeter_Book_regular_polytopes,Humphreys_Book}.  The asymmetric
Bethe ansatz replaces some of those mirrors by perfectly reflecting ones
\cite{JacksonEtAl2024Asymmetric}, and has been applied to two bosons on a
ring with a delta barrier \cite{Olshanii_2025es}.  The same models have
been studied for a plane wave scattering off delta barriers that meet at a
point \cite{Yu_2026qs}.  For these models Qiu, Guan, and Yu found a
consistency identity of their own, the Kaleidoscope Yang--Baxter equation
(KYBE) \cite{QiuGuanYu2026Kaleidoscope}, distinct from the earlier
Yang--Baxter-type equations for kaleidoscope scattering
\cite{GeZhaoZhang1993Kaleidoscope}.  Written in the shift and clock
operators of an $N$-dimensional space, the KYBE is an identity between two
explicit matrices.  The question is which finite-dimensional algebra they
generate.

Fix an even integer $N=2n\geq4$ and put $\omega=\exp(2\pi i/N)$.  Let $t$ and
$d$ be the shift and clock operators on $\CC^N$: on the standard basis
indexed by $j\in\ZZ/N\ZZ$, $te_j=e_{j-1}$ and $de_j=\omega^je_j$, so that
$t^N=d^N=\Id$ and $td=\omega dt$.  This is the Weyl commutation relation of
quantum mechanics \cite{Weyl_Book} in Schwinger's finite-dimensional
realization \cite{Schwinger1960Unitary}; the algebra it presents is a
multiplicative analogue of a Weyl algebra \cite{McConnellPettit1988}.  For a complex parameter $u$, set
\begin{equation}\label{eq:intro-generators}
  s=(ud-u^{-1}d^{-1})^{-1},\qquad
  T=\begin{pmatrix}t&0\\0&t^{-1}\end{pmatrix},\qquad
  \Lambda=\begin{pmatrix}s&s\\-s&-s\end{pmatrix},
\end{equation}
two operators on $\CC^{2N}$.  The inverse defining $s$ exists exactly when
$u\neq0$ and $u^N\neq1$; we call such $u$ admissible.  Then $T^N=\Id$ and
$\Lambda^2=0$.  Qiu, Guan, and Yu prove the KYBE $[T(\Id+\kappa\Lambda)]^N=\Id$,
an identity in a formal variable $\kappa$, by a spectral argument.  Expanding it
in powers of $\kappa$ yields identities in which words in $T$ and $\Lambda$
appear summed over their exponents
\cite[Secs.~4.3--5]{QiuGuanYu2026Kaleidoscope}.  They then conjecture the
termwise statement
\begin{equation}\label{eq:qgy-conjecture}
  \Lambda T^{k_1}\Lambda T^{k_2}\cdots T^{k_M}\Lambda=0
  \quad\text{for all }k_1,\ldots,k_M\in\ZZ
  \quad\text{whenever }M\geq N/2,
\end{equation}
and verify it through $N=10$.  They also state, without proof, that $T$ and
$\Lambda$ generate an algebra $A=S+J$ of Loewy length $N/2$
\cite[Sec.~5]{QiuGuanYu2026Kaleidoscope}.  Here $S$ is spanned by the
powers of $T$, and $J$, the two-sided ideal generated by $\Lambda$, is the
Jacobson radical.

What is unfinished is a proof uniform in $N$, and the algebra itself.  Since
$T^N=\Id$, the exponents enter \eqref{eq:qgy-conjecture} only modulo $N$,
and a word with more than $N/2$ intervening factors contains one with exactly
$N/2$.  At each fixed order the conjecture is therefore a finite computation,
which is how the cases through $N=10$ were settled.  No argument covering
every order was known, and the KYBE itself yields only cancellations summed
over exponents, not the vanishing of each word.  Of the algebra only the
decomposition $A=S+J$ and a Loewy length had been stated, without proof.
That $J$ is the radical was unproved, and the dimension, the dependence on
$u$ and the multiplication of $A$ were not known.

We prove \eqref{eq:qgy-conjecture} for every even $N\geq4$ and every
admissible complex $u$, the admissible points of the unit circle
$u=\exp(i\theta)$ of Qiu, Guan, and Yu included.  We determine the exact
threshold, the layer $M=n-1$ just below it, which we call the sharp layer,
and the algebra the two operators generate.  The value $N/2$ they record
equals the largest power of $J$ that is nonzero; the Loewy length in the
standard sense, the least $r$ for which $J^r=0$, is $n+1$.

\begin{theorem}[Exact chained-word threshold and monotone-path structure]
\label{thm:main}
Let $N=2n\geq4$ be even, let $u$ be admissible, and let $T$ and $\Lambda$ be
as in \eqref{eq:intro-generators}.  Then
\[
  \Lambda T^{k_1}\Lambda\cdots T^{k_M}\Lambda=0
\]
for every $M\geq n$ and every $(k_1,\ldots,k_M)\in\ZZ^M$.  At the sharp
layer $M=n-1$ the word is nonzero if and only if
\begin{equation}\label{eq:sharp-criterion-intro}
  n\nmid j k_j\qquad(1\leq j<n),
\end{equation}
and every nonzero word there has rank two.

Let $A_{N,u}$ be the unital subalgebra of $\operatorname{End}(\CC^{2N})$
generated by $T$ and $\Lambda$, and let $J_{N,u}=A_{N,u}\Lambda A_{N,u}$ be
the two-sided ideal generated by $\Lambda$.  Then $J_{N,u}$ is the Jacobson
radical of $A_{N,u}$, with $J_{N,u}^{\,n}\neq0$ and $J_{N,u}^{\,n+1}=0$.  The
isomorphism type of $A_{N,u}$ is independent of admissible $u$, and
\[
 \dim A_{N,u}=N^2-N+2,
 \qquad
 \dim J_{N,u}^{\,r}=2+4(n-r)(n-r+1)
 \quad(1\leq r\leq n).
\]
Its center is $\CC$, its Cartan matrix has rank $N-1$, and it has infinite
global dimension and infinite representation type.

For $a\in\ZZ/N\ZZ$ let $e_a=\frac1N\sum_{k=0}^{N-1}\omega^{-ak}T^k$, the
spectral idempotent of $T$, and let $\ell(a)=\min\{\bar a,N-\bar a\}$ for the
representative $\bar a\in\{0,\ldots,N-1\}$, the level of $a$.  The $e_a$ are
primitive.  The block $e_aA_{N,u}e_b$ is $\CC e_a$ for $a=b$, zero for
$a\neq b$ of equal level, and one-dimensional otherwise, spanned by an
element $x_{a,b}$.  These can be normalized so that $x_{a,b}x_{b,c}=x_{a,c}$
whenever $\ell(a),\ell(b),\ell(c)$ are strictly increasing or strictly
decreasing, and $x_{a,b}x_{b,c}=0$ whenever the level changes direction.
\end{theorem}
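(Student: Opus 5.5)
The plan is to reduce everything to an $N\times N$ problem and then work in the eigenbasis of $t$.

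\textbf{Reduction to $N\times N$.} Write $\Lambda=\binom{1}{-1}(1\;\;1)\otimes s$. For any $k$,
\[
(1\;\;1)\,T^k\binom{1}{-1}=t^k-t^{-k}=:D_k .
\]
Hence
\[
\Lambda T^{k_1}\Lambda\cdots T^{k_M}\Lambda=\binom{1}{-1}(1\;\;1)\otimes P,
\qquad P=sD_{k_1}sD_{k_2}\cdots sD_{k_M}s,
\]
and the rank of the word equals $\operatorname{rank}P$.

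\textbf{Fourier basis.} Pass to the eigenbasis $f_a$ ($a\in\ZZ/N\ZZ$) of $t$. There $D_k$ is diagonal with eigenvalue $\omega^{ak}-\omega^{-ak}=2i\sin(2\pi ak/N)$, which vanishes on levels $0$ and $n$. The clock $d$ becomes a shift $f_a\mapsto f_{a+1}$, so $Q=ud-u^{-1}d^{-1}$ acts by $Qf_a=uf_{a+1}-u^{-1}f_{a-1}$ and $s=Q^{-1}$.

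\textbf{The common flag.} The central step is to build a flag
\[
0=F_{-1}\subset F_0\subset\cdots\subset F_n=\CC^N
\]
that is independent of $k$ and satisfies $sD_k F_r\subseteq F_{r-1}$ for every $k$ ("every factor lowers by a step"). Then $M\ge n$ factors kill everything.

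I would first try $F_r$ spanned by vectors adapted to the level function, twisted by $Q$. Concretely, take $F_r=Q^{-1}(\text{span}\{f_a:\ell(a)\le r\})$ or a variant. The relevant mechanism is that $Q$ changes level by exactly $\pm1$, while $D_k$ kills the two extreme levels, so the composite $sD_k$ cannot raise the filtration degree and strictly loses one level at the boundary.

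I expect this to be the main obstacle. The reason is that $s=Q^{-1}$ is not banded in levels, so the flag must be chosen so that $Q^{-1}$ is controlled. This probably means solving the three-term recurrence $uf_{a+1}-u^{-1}f_{a-1}$ explicitly, using $u^N\neq1$ for invertibility, and checking that the resulting subspaces are stable under every $D_k$ simultaneously.

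\textbf{Sharp layer.} Once the flag is in hand, I treat $M=n-1$ by computing the induced map on the associated graded pieces $F_r/F_{r-1}$. These pieces should be two-dimensional, spanned by the levels $\pm a$. The word $P$ then induces a map from the top piece to the bottom piece that is a product of scalar factors of the form $\sin(2\pi jk_j/N)$ times nonzero $u$-dependent constants. This product is nonzero iff $n\nmid jk_j$ for all $j$, which gives \eqref{eq:sharp-criterion-intro}. The two-dimensionality of the graded pieces yields rank exactly two.

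\textbf{Structure of $A_{N,u}$.} I would use the idempotents $e_a$, which correspond to the $f_a$ blocks on each copy tensored with the $\pm$ structure of $T$. The steps are as follows.
\begin{enumerate}[label=(\roman*)]
\item Compute $e_a\Lambda e_b$ and show it is nonzero precisely when the levels differ. Here $\ell(a)$ for $T$ pairs $t$ and $t^{-1}$ eigenvalues, and equal-level blocks are killed by the factor $(1\;\;1)\cdots\binom{1}{-1}$.
\item Set $x_{a,b}$ equal to a suitably normalized $e_a\Lambda e_b$ or product through intermediate levels, and derive the monotone-path multiplication rule: products through levels that reverse direction pass through a $D$-type zero, while monotone products compose with scalars that can be absorbed into the normalization.
\item Show that the scalars are nonzero rational functions of $u$ that can be gauged away. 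This proves the isomorphism type is independent of admissible $u$.
\end{enumerate}

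\textbf{Consequences.} The remaining claims follow by counting and standard arguments.
\begin{itemize}
\item Count pairs $(a,b)$ of distinct levels, plus the $N$ diagonal idempotents, to get $\dim A_{N,u}=N^2-N+2$.
\item Count monotone paths of length at least $r$ to get $\dim J^r$.
\item $J$ is nilpotent, and $A/J\cong\CC^N$ is semisimple, so $J$ is the Jacobson radical.
\item The center is $\CC$ because the quiver is connected and has no central paths.
\item The Cartan matrix is read off from the dimensions of $e_aAe_b$.
\item For infinite global dimension, exhibit a simple module with an infinite periodic or non-terminating projective resolution, using the zero relations at direction changes.
\item For infinite representation type, find a wild or extended-Dynkin subquiver in the Gabriel quiver, where levels adjacent by one are joined by arrows.
\end{itemize}
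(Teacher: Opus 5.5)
\textbf{There is a genuine gap.} Your reduction to $P=sD_{k_1}s\cdots D_{k_M}s$ is correct. The step that carries the whole vanishing theorem, a single flag lowered by every $sD_k$, is never supplied, and the candidate you suggest fails.

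The bottom nonzero step of any such flag is killed by every $sD_k$, so it must lie in $\bigcap_k\ker D_k=\Span\{f_0,f_n\}$. Your $F_0=\CC\,sf_0$ does not, because $sf_0$ is supported on the odd modes. For $N=4$ one has $sf_0=(u^4-1)^{-1}(uf_1+u^3f_3)$ and $sD_1sf_0=-2iu^2(u^4-1)^{-1}f_2\neq0$. Your indexing $F_{-1}\subset F_0\subset\cdots\subset F_n$ also has $n+1$ steps, so even a working flag of that shape would give vanishing only from $M=n+1$.

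The missing idea is to identify $\CC^N$ with $R=\CC[z,z^{-1}]/(z^N-1)$, $f_a=z^a$, so that $t$ is the algebra automorphism $z\mapsto\omega z$. Put $x=uz+u^{-1}z^{-1}$, $Q=uz-u^{-1}z^{-1}$ (your operator $Q$ is multiplication by it) and $\rho=z^n$. Then $Q^2=x^2-4$, $t^{\pm k}x=c_kx\pm\eta_kQ$ and $t^{\pm k}\rho=(-1)^k\rho$, where $c_k=(\omega^k+\omega^{-k})/2$ and $\eta_k=(\omega^k-\omega^{-k})/2$. Hence $(t^k-t^{-k})g(x)$ is odd in $Q$, that is, $Q$ times a polynomial in $x$ of smaller degree, and $s=Q^{-1}$ removes the $Q$. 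So every $sD_k$, whatever $k$, lowers $\cK_m=\Span\{x^r,\rho x^r:r<m\}$ by one step.

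In your language $\cK_m$ is spanned by $(ud+u^{-1}d^{-1})^rf_0$ and $(ud+u^{-1}d^{-1})^rf_n$ for $r<m$. It has $n$ two-dimensional steps, growing from the two extreme levels at once in non-coordinate directions. It exhausts $\CC^N$ only because admissibility makes $x$ injective on each parity class of roots.

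This also corrects your sharp-layer picture. The graded pieces are spanned by $x^r$ and $\rho x^r$, not by $f_{\pm a}$. The leading coefficient $\omega^{kr}-\omega^{-kr}$ of the divided difference gives the $u$-free scalar $\prod_j(\omega^{jk_j}-\omega^{-jk_j})$. Rank two comes from the two independent images $1$ and $\rho$ of $x^{n-1}$ and $\rho x^{n-1}$.

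\textbf{The structural sketch also has a false step.} In (i), $e_a\Lambda e_b\neq0$ if and only if $a-b$ is odd, not whenever the levels differ. Solving $Qs=1$ gives $s=(u^N-1)^{-1}\sum_{h=0}^{n-1}u^{2h+1}z^{2h+1}$, so $s$ joins only modes of opposite parity. Its coefficient at an odd difference $q$ is $\sigma_q=u^{[q]}/(u^N-1)$, with $[q]$ the representative of $q$ in $\{1,3,\ldots,N-1\}$. Thus $e_0\Lambda e_2=0$ although $e_0Ae_2\neq0$. Likewise $e_a\Lambda e_{-a}=0$ by parity, not because of the factor $(1\;\;1)\cdots\binom{1}{-1}$.

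What makes (ii) work is the following, all computed from the explicit $\sigma_q$:
\begin{enumerate}
\item Every nonzero $e_a\Lambda e_b$ has rank one; its $2\times2$ determinant vanishes because $\sigma_q\sigma_{-q}=u^N/(u^N-1)^2$.
\item Its column line at $a$ and row line at $b$ depend only on whether the other level is higher or lower.
\item At an interior vertex, an incoming and an outgoing line pair to $0$ at a turn and to $1-u^{\pm N}$ on a monotone passage.
\end{enumerate}
So the turn-zero is a cancellation between the $a$ and $-a$ terms, governed by the exponents $[q]$. It is a genuine zero of $D$ only at levels $0$ and $n$. The monotone nonvanishing is exactly where $u^N\neq1$ enters, and the block dimensions and the gauge to a $u$-free table depend on it.

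Once that presentation is established, $J^{n+1}=0$ gives the threshold with no flag at all; your proposal secures neither route. Finally, an extended Dynkin subquiver of the Gabriel quiver does not by itself give infinite type. You need a quotient in which every relation dies, such as the four arrows from level $1$ to level $2$ (from levels $0$ and $n$ to level $1$ when $n=2$), which gives the path algebra of type $\tilde A_3$.
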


Two further theorems extend this.  Theorem~\ref{thm:functional-sharp}
replaces the powers $T^{k_j}$ by arbitrary polynomials in a formal variable
$X$ standing for $T$, elements of $\CC[X,X^{-1}]/(X^N-1)$.  At the sharp
layer the rank is then the number of nonzero ones among two scalars, one for each sign.  Each is the product over the
positions $j$ of the difference between the
substituted polynomial's values at $\pm\omega^{j}$ and at $\pm\omega^{-j}$.
Rank one occurs for every $N\geq6$.  Theorem~\ref{thm:axiomatic-main} lifts
the mechanism to an arbitrary field.  Its data are two invertible operators
$R$ and $\sigma$ on a space $E$ with a flag $0=F_0\subset\cdots\subset
F_n=E$, no dimension prescribed for the quotients.  Two operators on
$E\oplus E$ play the parts of $T$ and $\Lambda$.  $\Theta$ has the blocks
$R^{-1}$ and $R$ on the diagonal and $R-R^{-1}$ below; $\Xi$ has the single
block $\sigma$ above the diagonal, so that $\Xi^2=0$.  The
four conditions are that $R+R^{-1}$ preserves every $F_r$, that
$\sigma(R-R^{-1})$ maps every $F_r$ into $F_{r-1}$, that the $(n-1)$st power
of the latter is nonzero, and that the algebra generated by $\Theta$ and
$\Theta^{-1}$ is semisimple.  Under them every word
$\Xi\Theta^{k_1}\Xi\cdots\Theta^{k_M}\Xi$ with $M\geq n$ vanishes, and the word
with all exponents one at $M=n-1$ does not.  The two-sided ideal generated
by $\Xi$ is the radical of the algebra generated by $\Theta$ and $\Xi$, with
nilpotency index $n+1$.  Theorem~\ref{thm:axiom-independence} shows that
none of the four conditions can be dropped without losing a conclusion, and
the root-of-unity model is one instance.

One flag drives the threshold.  Every word reduces, without change of
length, to a product of the $N\times N$ operators $s(t^k-t^{-k})$, one per
factor, followed by $s$.  Work on the
functions of the $N$th roots of unity, with $z$ the coordinate, and put
$x=uz+u^{-1}z^{-1}$ and $\rho=z^n$.  The spaces spanned by $x^r$ and
$\rho x^r$ for $r<m$, with $m=0,\ldots,n$, form a flag of length $n$.  Each of these operators,
whatever its exponent $k$, lowers it by a single step.  The number of
factors alone therefore forces the vanishing, and the exponents return only
at the sharp layer, where \eqref{eq:sharp-criterion-intro} decides.  The
presentation in Theorem~\ref{thm:main} comes from a second structure.  The
spectral idempotents $e_a$, together with a finite Fourier expansion of $s$
in the powers of $z$, give every block $e_aA_{N,u}e_b$ explicitly.

Expanding the KYBE gives identities in which words are summed over their
exponents.  Theorem~\ref{thm:main} decides, uniformly in $N$, every word
with at least $n$ intervening factors and, at the sharp layer, which words
survive and with what rank.  Every term of the expansion in which $\Lambda$
occurs at least $n+1$ times therefore vanishes term by term, while the lower
orders remain genuine cancellations.  The four conditions of
Theorem~\ref{thm:axiomatic-main} are a test any other operator pair can be
run against.  The presentation in Theorem~\ref{thm:main} places $A_{N,u}$
in the representation theory of quivers, where its infinite type and its
singular Cartan matrix are read off.  To our knowledge this is the first
exact determination of the algebra generated by $T$ and $\Lambda$.

Section~\ref{sec:setup} fixes the operators and the admissible parameter
domain and proves the flag principle behind the threshold.
Section~\ref{sec:reduction} reduces a word to a product on $\CC^N$ without
changing its length.  Section~\ref{sec:flag} builds the model on the roots of
unity and its common flag and proves the vanishing.
Section~\ref{sec:sharpness} classifies the sharp layer, with the rank and
counting formulas and the polynomial extension.  Section~\ref{sec:radical}
determines the represented algebra, its radical filtration, center, Cartan
matrix, global dimension and representation type.
Section~\ref{sec:axiomatic-extension} gives the arbitrary-field formulation
and the witnesses showing that none of its four conditions can be dropped.

\section{Operators, parameter domain, and threshold}\label{sec:setup}

Fix an even integer $N=2n\geq4$ and put
\[
  \omega=\exp(2\pi i/N).
\]
Index the standard basis of $\CC^N$ by $j\in\ZZ/N\ZZ$ and define
\begin{equation}\label{eq:shift-clock}
  t e_j=e_{j-1},\qquad d e_j=\omega^j e_j.
\end{equation}
These are the shift and clock operators of the finite Weyl representation
\cite{Schwinger1960Unitary}.
Then
\begin{equation}\label{eq:torus-relations}
  t^N=d^N=\Id,\qquad td=\omega dt.
\end{equation}
For $u\in\CC^\times$, set
\begin{equation}\label{eq:generators}
  a=ud-u^{-1}d^{-1},\qquad s=a^{-1},\qquad
  T=\begin{pmatrix}t&0\\0&t^{-1}\end{pmatrix},\qquad
  \Lambda=\begin{pmatrix}s&s\\-s&-s\end{pmatrix},
\end{equation}
whenever $a$ is invertible. These are the conventions used throughout;
matrix products are composed from right to left. With $u=e^{i\theta}$ the
diagonal entries of $a$ are $2i\sin(\theta+2\pi j/N)$.  The $s$ used here
is therefore $1/(2i)$ times the matrix displayed in
\cite{QiuGuanYu2026Kaleidoscope}, whose entries are
$[\sin(\theta+2\pi j/N)]^{-1}$.  The factor is the $2i$ carried by the
scattering coefficient there. Rescaling $\Lambda$ by a nonzero constant
changes nothing below.  A word with $M+1$ factors of $\Lambda$ is multiplied
by $(2i)^{-(M+1)}$, so its vanishing, its rank, and the algebra generated by
$T$ and $\Lambda$ are unaffected.

\begin{lemma}[Exact exceptional set]\label{lem:parameter-domain}
The operator $s$ in \eqref{eq:generators} exists if and only if
\begin{equation}\label{eq:admissible-domain}
  u\neq0\qquad\text{and}\qquad u^N\neq1.
\end{equation}
\end{lemma}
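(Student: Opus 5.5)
Since $d$ is diagonal with entries $\omega^j$, the operator $a=ud-u^{-1}d^{-1}$ is diagonal in the standard basis. Its $j$th entry is
\[
  a_j=u\omega^j-u^{-1}\omega^{-j}=u^{-1}\omega^{-j}\bigl(u^2\omega^{2j}-1\bigr),\qquad j\in\ZZ/N\ZZ .
\]
The plan is to show that, for $u\neq0$, $a$ is invertible exactly when no $a_j$ vanishes, and then to rewrite that condition as $u^N\neq1$. When $u=0$ the expression $u^{-1}$ is undefined, so $s$ does not exist; this accounts for the first condition in \eqref{eq:admissible-domain}.

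Now take $u\neq0$. Because $a$ is diagonal, it is invertible precisely when every $a_j\neq0$, which happens precisely when $u^2\neq\omega^{-2j}$ for all $j$. As $j$ runs over $\ZZ/N\ZZ$ with $N=2n$, the element $\omega^{-2j}$ runs over all $n$th roots of unity. So $a$ fails to be invertible exactly when $u^2$ is an $n$th root of unity, that is, when $u^{2n}=u^N=1$.

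For the converse direction, suppose $u^N=1$. Then $u^2$ is an $n$th root of unity, so it equals $\omega^{-2j}$ for some $j$, and then $a_j=0$. Hence $a$ is singular and $s$ does not exist.

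There is no real obstacle in this argument. The one step that needs care is the counting: one must check that $\{\omega^{2j}\}$ is the full set of $n$th roots of unity, not a proper subset of it. This holds because $\omega^2=\exp(2\pi i/n)$ is a primitive $n$th root of unity.
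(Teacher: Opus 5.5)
Your proof is correct and follows essentially the same route as the paper: the diagonal entry $a_j$ vanishes exactly when $u^2=\omega^{-2j}$, and this happens for some $j$ exactly when $u^N=1$. The only cosmetic difference is in the converse, where the paper writes $u=\omega^\ell$ and takes $j\equiv-\ell\pmod n$, while you use that $\omega^2$ is a primitive $n$th root of unity.
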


\begin{proof}
The diagonal entry of $a$ at site $j$ is
$a_j=u\omega^j-u^{-1}\omega^{-j}$.  For $u\neq0$, it vanishes precisely
when $u^2=\omega^{-2j}$.  Such an equality implies
$u^N=(u^2)^n=1$.  Conversely, if $u^N=1$, write $u=\omega^\ell$ and choose
$j\equiv-\ell\pmod n$; then $u^2=\omega^{-2j}$.  The value $u=0$ is
excluded already by the occurrence of $u^{-1}$.
\end{proof}

For fixed $N$, $M$, and exponent tuple, each entry of a chained word is a
rational function of $u$.  Thus vanishing on the admissible unit circle
already implies vanishing throughout \eqref{eq:admissible-domain} by the
identity principle for rational functions.  Our proof works directly on
this domain and also determines the nonzero words and the algebra structure
at every admissible parameter.

For $M\geq0$ and $\boldsymbol{k}=(k_1,\ldots,k_M)\in\ZZ^M$, write
\begin{equation}\label{eq:word-definition}
  \Wword_{N,M}(\boldsymbol{k};u)
  =\Lambda T^{k_1}\Lambda\cdots T^{k_M}\Lambda.
\end{equation}
Thus $M$ counts the intervening powers of $T$, and the word contains $M+1$
copies of $\Lambda$.  For admissible $u$, define
\begin{equation}\label{eq:tau-definition}
  \wtau_N(u)=\min\bigl\{m\geq0:
  \Wword_{N,M}(\boldsymbol{k};u)=0
  \text{ for every }M\geq m\text{ and every }\boldsymbol{k}\in\ZZ^M
  \bigr\}.
\end{equation}

\begin{lemma}[Common-flag threshold principle]\label{lem:abstract-threshold}
Let $V$ be a vector space with a finite flag
\[
  0=F_0\subseteq F_1\subseteq\cdots\subseteq F_n=V,
\]
let $S\in\operatorname{End}(V)$, and let
$\{B_\alpha\}_{\alpha\in A}\subseteq\operatorname{End}(V)$ satisfy
\[
  B_\alpha F_m\subseteq F_{m-1}
  \qquad(1\leq m\leq n,\ \alpha\in A).
\]
Then every product $B_{\alpha_1}\cdots B_{\alpha_M}S$ is zero for
$M\geq n$.  If, in addition, there are $v\in V$ and $\alpha_*\in A$ such
that
\[
  B_{\alpha_*}^{\,n-1}Sv\neq0,
\]
then $n$ is the exact chained-word threshold for this family of
products.
\end{lemma}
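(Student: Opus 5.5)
The plan is a straightforward induction along the flag, followed by a witness argument for exactness.

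\emph{Vanishing for $M\geq n$.} Since $F_n=V$, we have $SV\subseteq F_n$. I would apply the factors from right to left and prove by induction on $j$ that
\[
B_{\alpha_{M-j+1}}\cdots B_{\alpha_M}SV\subseteq F_{n-j}
\qquad(0\leq j\leq n).
\]
The inductive step is exactly the hypothesis $B_\alpha F_m\subseteq F_{m-1}$, applied with $m=n-j$. At $j=n$ this gives image contained in $F_0=0$. For $M\geq n$, the product of the first $M-n$ factors is then applied to the zero subspace, so the whole product $B_{\alpha_1}\cdots B_{\alpha_M}S$ vanishes. The argument never uses the labels $\alpha_i$, which is why the bound is uniform over the whole family.

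\emph{Exactness.} I read the threshold as in \eqref{eq:tau-definition}: the least $m$ such that every product with $M\geq m$ factors $B$ vanishes. The first part shows this threshold is at most $n$. For the reverse inequality it is enough to exhibit one nonzero product of length $n-1$. The product $B_{\alpha_*}^{\,n-1}S$ is such a product, with all labels equal to $\alpha_*$, and it is nonzero because it does not kill $v$. Hence the minimum cannot be $n-1$. Since vanishing for every $M\geq m$ includes vanishing at $M=n-1$ whenever $m\leq n-1$, the minimum cannot be any smaller value either. The threshold is therefore exactly $n$.

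\emph{Main obstacle.} There is no real obstacle here. The only points needing care are the indexing conventions: matrix products act from right to left, and the minimum must be taken over the condition ``for all $M\geq m$'' rather than at a single length. With those fixed, the proof is a few lines.
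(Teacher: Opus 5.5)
Your proposal is correct and takes the same approach as the paper. The paper also notes that $S$ maps into $F_n$, lets each of the next $n$ lowering factors (read right to left) drop the image one flag level down to $F_0=0$, and uses the nonzero vector $B_{\alpha_*}^{\,n-1}Sv$ to rule out vanishing at length $n-1$; your version just makes the induction indexing and the minimum-over-all-$M\geq m$ reading of the threshold explicit.
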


\begin{proof}
The map $S$ sends $V$ into $F_n$.  Reading a product from right to left,
each of the next $n$ lowering maps drops the image by one flag level, so it
reaches $F_0=0$.  This also annihilates every longer product.  The displayed
nonzero vector excludes vanishing at level $n-1$, proving exactness.
\end{proof}

In this notation, Theorem~\ref{thm:main} asserts
\begin{equation}\label{eq:main-threshold}
  \wtau_N(u)=\frac{N}{2}
\end{equation}
throughout the admissible parameter domain.

\section{Direct block reduction}\label{sec:reduction}

Let
\[
  \mathbf q=\begin{pmatrix}1\\-1\end{pmatrix},
  \qquad \mathbf r=\begin{pmatrix}1&1\end{pmatrix},
\]
where scalar entries denote identity operators on $\CC^N$.  Then
\begin{equation}\label{eq:rank-one-identities}
  \Lambda=\mathbf q\,s\,\mathbf r,
  \qquad
  \mathbf r T^k\mathbf q=t^k-t^{-k}=:\Delta_k.
\end{equation}
Consequently, direct multiplication in the original representation gives
\begin{equation}\label{eq:rank-one-word}
  \Wword_{N,M}(\boldsymbol{k};u)
  =\mathbf q\,F_{N,M}(\boldsymbol{k};u)\,\mathbf r,
  \qquad
  F_{N,M}=s\Delta_{k_1}s\cdots\Delta_{k_M}s.
\end{equation}
In particular, the noncommutative order in the original word is unchanged.
Moreover, the original word is zero if and only if $F_{N,M}$ is zero.  Indeed,
its two block rows are $(F_{N,M},F_{N,M})$ and
$(-F_{N,M},-F_{N,M})$.  Thus the reduction takes place directly in the
original representation.

Because $T^N=\Id$, each exponent may be reduced modulo $N$.  We use integer
exponents below, with all identities depending only on their residue classes.

\section{The cyclic model and its common flag}\label{sec:flag}

Fix an admissible $u$.  Work in the cyclic algebra
\begin{equation}\label{eq:cyclic-algebra}
  R=\CC[z,z^{-1}]/(z^N-1).
\end{equation}
Evaluation at the $N$ roots $z=\omega^j$ identifies $R$ with $\CC^N$.
Under this identification, $d$ is multiplication by $z$ and
\begin{equation}\label{eq:t-action}
  (tf)(z)=f(\omega z).
\end{equation}
Put
\begin{equation}\label{eq:axrho}
  a=uz-u^{-1}z^{-1},\qquad s=a^{-1},\qquad
  x=uz+u^{-1}z^{-1},\qquad \rho=z^n.
\end{equation}
Lemma~\ref{lem:parameter-domain} says exactly that $a$ is a unit in $R$.
We also have
\begin{equation}\label{eq:a-square}
  a^2=x^2-4,
  \qquad \rho(\omega^j)=(-1)^j.
\end{equation}

\begin{lemma}[Parity basis]\label{lem:parity-basis}
The $N$ elements
\begin{equation}\label{eq:parity-basis}
  1,x,\ldots,x^{n-1},\rho,\rho x,\ldots,\rho x^{n-1}
\end{equation}
form a basis of $R$.
\end{lemma}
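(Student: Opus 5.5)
Since $\dim R=N=2n$ and the list \eqref{eq:parity-basis} has exactly $N$ elements, it suffices to prove linear independence.  Equivalently, we must show that an element $f=p(x)+\rho\,q(x)$ with $\deg p,\deg q\le n-1$ vanishes at all $N$ roots $z=\omega^j$ only if $p=q=0$.  The plan is to split the roots by the parity of $j$, which is detected by $\rho(\omega^j)=(-1)^j$ from \eqref{eq:a-square}.  On even $j$ the function $f$ equals $p(x)+q(x)$, and on odd $j$ it equals $p(x)-q(x)$.  It therefore suffices to show that the polynomial $p+q$ of degree at most $n-1$ has $n$ distinct roots among the values $x(\omega^j)$ with $j$ even, and similarly that $p-q$ has $n$ distinct roots among the values with $j$ odd.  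Then $p\pm q=0$, hence $p=q=0$.

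The key step is to count the distinct values of $x_j:=u\omega^j+u^{-1}\omega^{-j}$ within each parity class.  Suppose $x_j=x_k$.  Since $w\mapsto w+w^{-1}$ takes the same value exactly at $w$ and $w^{-1}$, we get $u\omega^j=u\omega^k$ or $u\omega^j=u^{-1}\omega^{-k}$.  The first forces $j=k$.  The second gives $u^2=\omega^{-(j+k)}$, and then $u^N=\omega^{-n(j+k)}=(-1)^{j+k}$.  If $j\equiv k\pmod 2$ this says $u^N=1$, which contradicts admissibility by Lemma~\ref{lem:parameter-domain}.  So within each parity class the $n$ values $x_j$ are pairwise distinct.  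A polynomial of degree at most $n-1$ vanishing at $n$ distinct points is zero, which completes the argument.

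The main point to get right is precisely this distinctness step.  Admissibility is what guarantees that collisions $x_j=x_k$ can only occur between opposite parities.  This explains why the factor $\rho$ is needed to produce a basis at all.  I would also double-check that the identification of $R$ with functions on the roots is compatible with the substitution $p(x)$, which holds because evaluation is a ring homomorphism.
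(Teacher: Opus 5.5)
Your proof is correct and follows the paper's argument: you split the roots by parity using $\rho(\omega^j)=(-1)^j$, show that $x$ takes $n$ distinct values on each parity class (a collision $x(\omega^j)=x(\omega^k)$ with $j\neq k$ forces $u^2=\omega^{-(j+k)}$, hence $u^N=(-1)^{j+k}=1$ for equal parity), and conclude $P\pm Q=0$. One small correction to your closing aside: cross-parity collisions occur only when $u^N=-1$, so for every other admissible $u$ the powers $1,x,\ldots,x^{N-1}$ already form a basis, and $\rho$ is needed for uniformity in $u$ and for the flag $\mathcal{K}_m$ rather than for the mere existence of a basis.
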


\begin{proof}
Suppose $P(x)+\rho Q(x)=0$ in $R$, where
$\deg P,\deg Q<n$.  At the $n$ even-indexed roots, $P+Q$ vanishes; at the
$n$ odd-indexed roots, $P-Q$ vanishes.

Within either parity class, the corresponding values of $x$ are distinct.
Indeed, if $j\not\equiv\ell\pmod N$ have the same parity and
$x(\omega^j)=x(\omega^\ell)$, then rearrangement gives
\[
  u^2=\omega^{-(j+\ell)}.
\]
Since $j+\ell$ is even, this is one of the pole conditions in
Lemma~\ref{lem:parameter-domain}, contradicting admissibility.  Thus $P+Q$
and $P-Q$, each of degree below $n$, vanish on $n$ distinct points and are
both zero.  Hence $P=Q=0$.  There are $N$ independent elements in the
$N$-dimensional algebra $R$.
\end{proof}

For $0\leq m\leq n$, define
\begin{equation}\label{eq:flag}
  \cK_m=\Span_{\CC}\{x^r,\rho x^r:0\leq r<m\}.
\end{equation}
Then $\cK_0=0$, $\dim\cK_m=2m$, and $\cK_n=R$.

For $k\in\ZZ$, set
\begin{equation}\label{eq:ck-ek}
  c_k=\frac{\omega^k+\omega^{-k}}{2},\qquad
  \eta_k=\frac{\omega^k-\omega^{-k}}{2}.
\end{equation}
Direct substitution yields
\begin{equation}\label{eq:shifted-x-rho}
  x(\omega^{\pm k}z)=c_kx\pm \eta_ka,
  \qquad
  \rho(\omega^{\pm k}z)=(-1)^k\rho.
\end{equation}

For $P\in\CC[X]$, introduce the divided-difference polynomial
\begin{equation}\label{eq:divided-difference}
  (\Dop_kP)(X)=
  \left.
  \frac{P(c_kX+\eta_kA)-P(c_kX-\eta_kA)}{A}
  \right|_{A^2=X^2-4}.
\end{equation}
The numerator is odd in $A$, so it is divisible by $A$ and the quotient is
a polynomial in $X$ and $A^2$.  The substitution in
\eqref{eq:divided-difference} is therefore legitimate.  Expansion of a
monomial gives
\begin{equation}\label{eq:divided-expansion}
  \Dop_kX^r=
  2\sum_{\substack{1\leq j\leq r\\j\ \mathrm{odd}}}
  \binom{r}{j}c_k^{r-j}\eta_k^j
  X^{r-j}(X^2-4)^{(j-1)/2}.
\end{equation}
Thus $\deg(\Dop_kP)\leq\deg P-1$, and constants are mapped to zero.

\begin{proposition}[Common lowering rule]\label{prop:lowering}
For
\begin{equation}\label{eq:Bk}
  B_k=s(t^k-t^{-k}),
\end{equation}
one has
\begin{equation}\label{eq:Bk-formula}
  B_k\bigl(P(x)+\rho Q(x)\bigr)
  =\Dop_kP(x)+(-1)^k\rho\Dop_kQ(x).
\end{equation}
Consequently,
\begin{equation}\label{eq:flag-lowering}
  B_k\cK_m\subseteq\cK_{m-1}
  \qquad(1\leq m\leq n, k\in\ZZ).
\end{equation}
\end{proposition}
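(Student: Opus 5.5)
The plan is a direct computation in the cyclic model $R$. By \eqref{eq:t-action}, $t^{\pm k}$ acts on $R$ as the substitution $z\mapsto\omega^{\pm k}z$. This substitution is a ring automorphism of $R$, since $(\omega^{\pm k}z)^N=z^N$. So for $f=P(x)+\rho Q(x)$ we get $(t^{\pm k}f)(z)=P(x(\omega^{\pm k}z))+\rho(\omega^{\pm k}z)\,Q(x(\omega^{\pm k}z))$. Inserting \eqref{eq:shifted-x-rho} gives
\[
  (t^k-t^{-k})f=\bigl[P(c_kx+\eta_ka)-P(c_kx-\eta_ka)\bigr]+(-1)^k\rho\bigl[Q(c_kx+\eta_ka)-Q(c_kx-\eta_ka)\bigr].
\]
The $\rho$-factor is the same $(-1)^k$ for both signs, so it factors out of the difference.

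The key step is to divide by $a$, which means multiplying by $s=a^{-1}$, a unit of $R$ by Lemma~\ref{lem:parameter-domain}. I would carry this out in the polynomial ring $\CC[X,A]$. The difference $P(c_kX+\eta_kA)-P(c_kX-\eta_kA)$ is odd in $A$, so it equals $A\cdot G(X,A^2)$ for a polynomial $G$. Now evaluate with $X\mapsto x$ and $A\mapsto a$; this is a ring homomorphism $\CC[X,A]\to R$. The image of the difference is $a\,G(x,a^2)=a\,G(x,x^2-4)$, using $a^2=x^2-4$ from \eqref{eq:a-square}. Multiplying by $s$ leaves $G(x,x^2-4)=(\Dop_kP)(x)$, exactly as defined in \eqref{eq:divided-difference}. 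The same argument applies to $Q$, and $s$ commutes with $\rho$ because $R$ is commutative. This gives \eqref{eq:Bk-formula}.

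For the flag statement \eqref{eq:flag-lowering}, take $f\in\cK_m$, so that $\deg P,\deg Q<m$. The expansion \eqref{eq:divided-expansion} shows that each term of $\Dop_kX^r$ has degree at most $(r-j)+(j-1)=r-1$, and that $\Dop_k$ kills constants. Hence $\deg\Dop_kP,\deg\Dop_kQ<m-1$, and $B_kf\in\cK_{m-1}$. The case $m=1$ is included: constants map to $0=\cK_0$.

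The only delicate point is the legitimacy of division by $a$. Division must be done formally in $\CC[X,A]$ before evaluating in $R$, so that the factorization is valid whatever the values of $a$. Invertibility of $a$ is used only to identify $s\cdot aG$ with $G$. Everything else is bookkeeping.
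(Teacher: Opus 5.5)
Your proposal is correct and follows essentially the same route as the paper. Both arguments realize $t^{\pm k}$ as the substitution $z\mapsto\omega^{\pm k}z$, insert \eqref{eq:shifted-x-rho}, multiply by $s=a^{-1}$ to obtain $\Dop_k$, and deduce the flag lowering from the degree bound in \eqref{eq:divided-expansion}. Your formal factorization in $\CC[X,A]$ before evaluating is the same justification the paper gives right after \eqref{eq:divided-difference}. Your uniform argument also covers the residues with $\eta_k=0$, which the paper treats in a separate remark.
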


\begin{proof}
Let $f=P(x)+\rho Q(x)$.  From \eqref{eq:t-action} and
\eqref{eq:shifted-x-rho},
\[
  \bigl(t^{\pm k}P(x)\bigr)(z)=P\bigl(c_kx\pm\eta_ka\bigr),
\]
and likewise for $Q$, together with the sign
$\rho(\omega^{\pm k}z)=(-1)^k\rho$.  Therefore
\[
  (t^k-t^{-k})P(x)
  =P(c_kx+\eta_ka)-P(c_kx-\eta_ka).
\]
Multiplying by $s=a^{-1}$ produces $\Dop_kP(x)$ by
\eqref{eq:divided-difference}.  The same computation on $\rho Q(x)$ supplies
the extra factor $(-1)^k$, which is \eqref{eq:Bk-formula}.  The degree bound
following \eqref{eq:divided-expansion} gives \eqref{eq:flag-lowering}.
For the residues with $\eta_k=0$, the operator $B_k$ is zero and the same
inclusion holds.
\end{proof}

\begin{proposition}[Uniform vanishing]\label{prop:vanishing}
For every $M\geq n$ and every $(k_1,\ldots,k_M)\in\ZZ^M$,
\[
  \Wword_{N,M}(\boldsymbol{k};u)=0.
\]
\end{proposition}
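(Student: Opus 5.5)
The plan is to combine the block reduction of Section~\ref{sec:reduction} with the common-flag principle, Lemma~\ref{lem:abstract-threshold}, applied to the lowering operators of Proposition~\ref{prop:lowering}.

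By \eqref{eq:rank-one-word}, the word $\Wword_{N,M}(\boldsymbol{k};u)$ vanishes if and only if
\[
  F_{N,M}=s\Delta_{k_1}s\Delta_{k_2}\cdots s\Delta_{k_M}s
\]
vanishes. Grouping each $s$ with the $\Delta$ to its right gives
\[
  F_{N,M}=B_{k_1}B_{k_2}\cdots B_{k_M}\,s,
\]
with $B_k=s(t^k-t^{-k})$ as in \eqref{eq:Bk}. Under the identification of Section~\ref{sec:flag}, $\CC^N\cong R$. Here $s$ acts as multiplication by $a^{-1}$, which is a unit by Lemma~\ref{lem:parameter-domain}, and $t$ acts by \eqref{eq:t-action}. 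So this factorization is an identity of operators on $R$.

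Next we apply Lemma~\ref{lem:abstract-threshold} with $V=R$, the flag $F_m=\cK_m$ from \eqref{eq:flag} (which satisfies $\cK_0=0$ and $\cK_n=R$), $S=s$, and the family $\{B_k\}_{k\in\ZZ}$. Its hypothesis $B_k\cK_m\subseteq\cK_{m-1}$ for $1\le m\le n$ and all $k$ is exactly \eqref{eq:flag-lowering}. This includes the residues with $\eta_k=0$, where $B_k=0$.

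The lemma then gives $B_{k_1}\cdots B_{k_M}s=0$ for all $M\ge n$. Concretely, $s$ maps $R$ into $\cK_n$, and each of the $M\ge n$ factors $B_{k_j}$ lowers the flag index by one, so the image reaches $\cK_0=0$. Hence $F_{N,M}=0$, and therefore $\Wword_{N,M}(\boldsymbol{k};u)=0$.

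The only point needing care is the order of composition: products are composed right to left, so $s$ acts first and the $B$'s act after it, which matches the lemma's form $B_{\alpha_1}\cdots B_{\alpha_M}S$. The substantive work, the degree lowering \eqref{eq:divided-expansion}, is already done in Proposition~\ref{prop:lowering}. No real obstacle remains, and sharpness is not needed for this statement.
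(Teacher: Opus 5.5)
Your proposal is correct and follows the paper's proof: rewrite $F_{N,M}=B_{k_1}\cdots B_{k_M}s$, then apply Lemma~\ref{lem:abstract-threshold} on $R$ with the flag $\cK_m$. The lowering hypothesis comes from Proposition~\ref{prop:lowering}, and \eqref{eq:rank-one-word} transfers the vanishing back to the word.
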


\begin{proof}
Equation \eqref{eq:rank-one-word} may be written as
\begin{equation}\label{eq:F-as-B}
  F_{N,M}=B_{k_1}B_{k_2}\cdots B_{k_M}s.
\end{equation}
Apply Lemma~\ref{lem:abstract-threshold} with $V=R$, $F_m=\cK_m$, $S=s$,
and the family $\{B_k:k\in\ZZ\}$.  Proposition~\ref{prop:lowering}
supplies its lowering hypothesis, so $F_{N,M}=0$ for $M\geq n$.
Equation~\eqref{eq:rank-one-word} gives the conjectured conclusion.
\end{proof}

\section{Triangular products and the complete sharp layer}
\label{sec:sharpness}

Fix $k\in\ZZ$.  The coefficient of $X^{r-1}$ in
\eqref{eq:divided-expansion} is
\begin{align}
  \lambda_{r,k}
  &=2\sum_{\substack{1\leq j\leq r\\j\ \mathrm{odd}}}
    \binom{r}{j}c_k^{r-j}\eta_k^j \notag\\
  &=(c_k+\eta_k)^r-(c_k-\eta_k)^r
    =\omega^{kr}-\omega^{-kr}.\label{eq:lambda-r}
\end{align}
All remaining terms have degree at most $r-2$.  The leading coefficients
factor for arbitrary mixed products.

\begin{lemma}[Associated graded product]\label{lem:graded-product}
Let $0\leq M\leq r$, and let $k_1,\ldots,k_M\in\ZZ$.  Then
\begin{align}
  \Dop_{k_1}\cdots\Dop_{k_M}X^r
  &={L}_{r,M}(\boldsymbol{k})X^{r-M}
    +\text{terms of degree at most $r-M-1$},\label{eq:graded-product}\\
  {L}_{r,M}(\boldsymbol{k})
  &:=\prod_{j=1}^{M}
  \bigl(\omega^{k_j(r-M+j)}-\omega^{-k_j(r-M+j)}\bigr).
  \label{eq:graded-scalar}
\end{align}
For $r<M$, the same operator product annihilates $X^r$.
\end{lemma}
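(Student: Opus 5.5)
Induction on $M$, with the rightmost operator applied first. The product $\Dop_{k_1}\cdots\Dop_{k_M}$ acts from right to left, so $\Dop_{k_M}$ hits $X^r$ first.

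For $M=0$ the claim is trivial, since the empty product is $1$ and $X^r$ has leading coefficient $1$. Suppose the statement holds for products of length $M-1$, and apply $\Dop_{k_M}$ to $X^r$. By \eqref{eq:divided-expansion} and \eqref{eq:lambda-r},
\[
  \Dop_{k_M}X^r=\lambda_{r,k_M}X^{r-1}+(\text{terms of degree}\le r-2),
\]
with $\lambda_{r,k_M}=\omega^{k_Mr}-\omega^{-k_Mr}$. Here we must check that $X^{r-j}(X^2-4)^{(j-1)/2}$ has degree exactly $r-1$ for every odd $j$. Its top coefficient is $1$, so the $X^{r-1}$ coefficient is the sum computed in \eqref{eq:lambda-r}.

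Next apply the remaining product $\Dop_{k_1}\cdots\Dop_{k_{M-1}}$, which has length $M-1$. The inductive hypothesis is applied to $X^{r-1}$, which is legitimate since $M-1\le r-1$. It contributes $L_{r-1,M-1}(k_1,\ldots,k_{M-1})X^{r-M}$ plus terms of lower degree.

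For the lower-order terms $X^{r'}$ with $r'\le r-2$ we use only the degree bound: each $\Dop_k$ lowers degree by at least one (noted after \eqref{eq:divided-expansion}). So these terms land in degree at most $r'-(M-1)\le r-M-1$, or vanish when $r'<M-1$. By linearity, the leading coefficient is therefore $L_{r-1,M-1}(k_1,\ldots,k_{M-1})\,\lambda_{r,k_M}$.

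It remains to match indices. In $L_{r-1,M-1}$ the $j$th factor carries the exponent $(r-1)-(M-1)+j=r-M+j$, which agrees with \eqref{eq:graded-scalar} for $j\le M-1$. The new factor $\lambda_{r,k_M}$ has exponent $r=r-M+M$, which is the $j=M$ factor. This establishes \eqref{eq:graded-product}.

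For $r<M$, the degree bound gives the annihilation claim. Each $\Dop_k$ lowers degree by at least one and kills constants, so after $r+1\le M$ applications $X^r$ is sent to zero.

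The only real obstacle is the index bookkeeping: the rightmost operator sees the highest degree $r$, and the leftmost sees degree $r-M+1$. The one point requiring care is that the lower-order remainder never contributes to degree $r-M$; the strict degree-lowering property settles this.
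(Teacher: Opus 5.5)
Your proof is correct and follows the paper's argument: induction on $M$, using the leading coefficient $\lambda_{r,k}=\omega^{kr}-\omega^{-kr}$ from \eqref{eq:lambda-r} and strict degree lowering to discard the remainder. The only difference is which end you peel off. The paper applies the induction hypothesis to $\Dop_{k_2}\cdots\Dop_{k_M}X^r$ and then removes the leftmost factor $\Dop_{k_1}$ acting on $X^{r-M+1}$, whereas you first apply the rightmost factor $\Dop_{k_M}$ to $X^r$; the index bookkeeping works equally well either way.
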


\begin{proof}
The case $M=0$ is immediate.  For $M\geq1$, recall that the rightmost
operator acts first.  Applying the induction hypothesis to
$\Dop_{k_2}\cdots\Dop_{k_M}X^r$ gives degree $r-M+1$ and the product of
the factors indexed by $2\leq j\leq M$.  The leading term of
$\Dop_{k_1}X^{r-M+1}$ is
$\lambda_{r-M+1,k_1}X^{r-M}$ by \eqref{eq:lambda-r}, which supplies the
missing factor.  Lower-degree terms cannot contribute to degree $r-M$.
The assertion for $r<M$ follows by repeated degree lowering.
\end{proof}

Repeated use of \eqref{eq:Bk-formula} also gives the two-sector identity
\begin{equation}\label{eq:mixed-B-product}
 B_{k_1}\cdots B_{k_M}\bigl(P(x)+\rho Q(x)\bigr)
 =\Dop_{k_1}\cdots\Dop_{k_M}P(x)
 +(-1)^{k_1+\cdots+k_M}\rho
  \Dop_{k_1}\cdots\Dop_{k_M}Q(x).
\end{equation}

\begin{proposition}[Complete sharp-layer classification]
\label{prop:sharp-classification}
At $M=n-1$, put
\begin{equation}\label{eq:mixed-sharp-scalar}
  C_N(\boldsymbol{k})
  :=\prod_{j=1}^{n-1}
  \bigl(\omega^{j k_j}-\omega^{-j k_j}\bigr).
\end{equation}
Then
\begin{equation}\label{eq:sharp-equivalences}
 \Wword_{N,n-1}(\boldsymbol{k};u)\neq0
 \quad\Longleftrightarrow\quad
 C_N(\boldsymbol{k})\neq0
 \quad\Longleftrightarrow\quad
 n\nmid j k_j\quad(1\leq j<n).
\end{equation}
Every nonzero sharp word has rank two.  The vector
\begin{equation}\label{eq:CN}
  v_N=ax^{n-1}\in R
\end{equation}
is a certificate with
\begin{equation}\label{eq:mixed-sharp-certificate}
  B_{k_1}\cdots B_{k_{n-1}}s\,v_N
  =C_N(\boldsymbol{k})\,1.
\end{equation}
Finally, among the $N^{n-1}$ exponent tuples modulo $N=2n$, exactly
\begin{equation}\label{eq:sharp-count}
  2^{n-1}\prod_{j=1}^{n-1}\bigl(n-\gcd(j,n)\bigr)
\end{equation}
give nonzero sharp words.
\end{proposition}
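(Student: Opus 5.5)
The plan is to compute $F_{N,n-1}=B_{k_1}\cdots B_{k_{n-1}}s$ explicitly on the parity basis of Lemma~\ref{lem:parity-basis}. Its image lies in $\cK_1=\Span\{1,\rho\}$, and it acts there through the scalar $C_N(\boldsymbol{k})$. The rank and vanishing statements for $\Wword_{N,n-1}$ then transfer through \eqref{eq:rank-one-word}.

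\emph{Step 1: the action of $F_{N,n-1}$.} Take any $g\in R$. Since $s$ is invertible, write $sg=P(x)+\rho Q(x)$ with $\deg P,\deg Q\leq n-1$, and let $p,q$ be the coefficients of $x^{n-1}$ in $P,Q$. By \eqref{eq:mixed-B-product},
\[
F_{N,n-1}\,g=\Dop_{k_1}\cdots\Dop_{k_{n-1}}P(x)+(-1)^{k_1+\cdots+k_{n-1}}\rho\,\Dop_{k_1}\cdots\Dop_{k_{n-1}}Q(x).
\]
Apply Lemma~\ref{lem:graded-product} with $M=n-1$. Monomials $X^r$ with $r<n-1$ are annihilated. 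For $r=M=n-1$, the result is the constant $L_{n-1,n-1}(\boldsymbol{k})$, whose $j$th factor has exponent $r-M+j=j$; hence $L_{n-1,n-1}(\boldsymbol{k})=C_N(\boldsymbol{k})$. Therefore
\[
F_{N,n-1}\,g=C_N(\boldsymbol{k})\bigl(p+(-1)^{\sum k_j}q\,\rho\bigr).
\]
Taking $g=v_N=ax^{n-1}$ gives $sg=x^{n-1}$, which is the certificate \eqref{eq:mixed-sharp-certificate}.

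\emph{Step 2: rank.} If $C_N(\boldsymbol{k})=0$, then $F_{N,n-1}=0$. Otherwise, choose $g$ with $sg=x^{n-1}$ and $g'$ with $sg'=\rho x^{n-1}$. Their images $C_N\cdot1$ and $\pm C_N\rho$ are independent, and $\cK_1$ contains the whole image, so $\operatorname{rank}F_{N,n-1}=2$. In $\Wword=\mathbf q F\mathbf r$, the map $\mathbf r$ is surjective onto $\CC^N$ and $\mathbf q$ is injective. Hence $\operatorname{rank}\Wword=\operatorname{rank}F$, which gives the first equivalence and the rank-two claim.

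\emph{Step 3: arithmetic criterion and count.} A factor $\omega^{jk_j}-\omega^{-jk_j}$ vanishes iff $\omega^{2jk_j}=1$, iff $2n\mid 2jk_j$, iff $n\mid jk_j$. This gives the second equivalence. For the count, the tuples are independent across positions, so it suffices to count, for each $j$, the residues $k\in\ZZ/2n\ZZ$ with $n\mid jk$. This condition is $k\equiv0 \pmod{n/\gcd(j,n)}$, which holds for exactly $2\gcd(j,n)$ residues mod $2n$. That leaves $2(n-\gcd(j,n))$ admissible residues per position, and the product over $j$ is \eqref{eq:sharp-count}.

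The only delicate point is the bookkeeping in Step 1. One must check that the exponent $r-M+j$ in Lemma~\ref{lem:graded-product} reduces to $j$, so that the position index multiplies $k_j$. One must also check that the sign $(-1)^{\sum k_j}$ cannot cancel the $\rho$-component, which holds because it multiplies a different basis vector.
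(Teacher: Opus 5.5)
Your proposal is correct and follows essentially the same route as the paper. The paper also applies Lemma~\ref{lem:graded-product} with $r=M=n-1$ and the two-sector identity \eqref{eq:mixed-B-product}, obtaining $C_N(\boldsymbol{k})\,1$ and $(-1)^{k_1+\cdots+k_{n-1}}C_N(\boldsymbol{k})\,\rho$ on $x^{n-1}$ and $\rho x^{n-1}$. It then transfers rank two through the invertible $s$, the surjective $\mathbf r$ and the injective $\mathbf q$, and counts $2\gcd(j,n)$ excluded residues per position exactly as you do.
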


\begin{proof}
After $n-1$ lowering maps, every input in $\cK_{n-1}$ is annihilated.
Equations \eqref{eq:graded-product}--\eqref{eq:mixed-B-product} show that
on the two remaining basis vectors the product acts by
\begin{align*}
 B_{k_1}\cdots B_{k_{n-1}}x^{n-1}
 &=C_N(\boldsymbol{k})\,1,\\
 B_{k_1}\cdots B_{k_{n-1}}\rho x^{n-1}
 &=(-1)^{k_1+\cdots+k_{n-1}}C_N(\boldsymbol{k})\,\rho.
\end{align*}
Thus the product is zero exactly when $C_N(\boldsymbol{k})=0$; otherwise
its image is $\cK_1=\Span\{1,\rho\}$ and its rank is two.  The rightmost
$s$ in \eqref{eq:F-as-B} is invertible.  Moreover, $\mathbf r$ is
surjective and $\mathbf q$ is injective, so \eqref{eq:rank-one-word}
preserves the rank of the normalized block.  This proves the first two
claims and \eqref{eq:mixed-sharp-certificate}.

The $j$th factor in \eqref{eq:mixed-sharp-scalar} vanishes exactly when
$\omega^{2jk_j}=1$, equivalently $n\mid j k_j$.  Put
$d_j=\gcd(j,n)$.  There are $2d_j$ residues $k_j$ modulo $2n$ satisfying
$n\mid j k_j$, hence $2(n-d_j)$ admissible residues.  Multiplication of
these independent coordinate counts gives \eqref{eq:sharp-count}.
\end{proof}

The same triangular calculation extends from monomials to arbitrary functions
of $T$.  Let
\begin{equation}\label{eq:cyclic-functional-calculus}
  \mathcal C_N=\CC[X,X^{-1}]/(X^N-1).
\end{equation}
Every $f\in\mathcal C_N$ has a unique representative
$f(X)=\sum_{k=0}^{N-1}\widehat f(k)X^k$.  For such an $f$, define
\begin{equation}\label{eq:Bf}
 B_f=s\bigl(f(t)-f(t^{-1})\bigr)
\end{equation}
and
\begin{equation}\label{eq:Df-pm}
 \Dop_f^+=\sum_{k=0}^{N-1}\widehat f(k)\,\Dop_k,
 \qquad
 \Dop_f^-=\sum_{k=0}^{N-1}(-1)^k\widehat f(k)\,\Dop_k.
\end{equation}
For $\boldsymbol f=(f_1,\ldots,f_M)\in\mathcal C_N^M$, put
\begin{equation}\label{eq:functional-word}
 \Wword_{N,M}(\boldsymbol f;u)
 =\Lambda f_1(T)\Lambda\cdots f_M(T)\Lambda.
\end{equation}

\begin{theorem}[Functional sharp layer]\label{thm:functional-sharp}
For every $M\geq n$ and every $\boldsymbol f\in\mathcal C_N^M$,
\begin{equation}\label{eq:functional-vanishing}
  \Wword_{N,M}(\boldsymbol f;u)=0.
\end{equation}
At $M=n-1$, define
\begin{align}
 C_N^+(\boldsymbol f)
 &=\prod_{j=1}^{n-1}
   \bigl(f_j(\omega^j)-f_j(\omega^{-j})\bigr),\label{eq:functional-plus}\\
 C_N^-(\boldsymbol f)
 &=\prod_{j=1}^{n-1}
   \bigl(f_j(-\omega^j)-f_j(-\omega^{-j})\bigr).
   \label{eq:functional-minus}
\end{align}
Then the exact rank is
\begin{equation}\label{eq:functional-rank}
 \operatorname{rank}\Wword_{N,n-1}(\boldsymbol f;u)
 =\boldsymbol 1_{C_N^+(\boldsymbol f)\neq0}
  +\boldsymbol 1_{C_N^-(\boldsymbol f)\neq0}.
\end{equation}
In particular, the functional sharp word is zero if and only if both product
scalars vanish.  The two vectors
\begin{equation}\label{eq:functional-certificates}
 v_N^+=ax^{n-1},\qquad v_N^-=a\rho x^{n-1}
\end{equation}
satisfy
\begin{align}
 B_{f_1}\cdots B_{f_{n-1}}s\,v_N^+
 &=C_N^+(\boldsymbol f)\,1,\label{eq:functional-certificate-plus}\\
 B_{f_1}\cdots B_{f_{n-1}}s\,v_N^-
 &=C_N^-(\boldsymbol f)\,\rho.
 \label{eq:functional-certificate-minus}
\end{align}
For every $N\geq6$, rank one occurs.  Consequently, the top radical power
$J^n$ contains a rank-one operator.
\end{theorem}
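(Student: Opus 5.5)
The plan is to rerun the monomial argument of Section~\ref{sec:flag} and Proposition~\ref{prop:sharp-classification} with $T^k$ replaced by $f(T)$, using linearity in the Fourier coefficients $\widehat f(k)$.

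\textbf{Reduction and vanishing.} First I redo the block reduction of Section~\ref{sec:reduction}. Since $f(T)=\operatorname{diag}(f(t),f(t^{-1}))$, one has $\mathbf r f(T)\mathbf q=f(t)-f(t^{-1})$. Hence, exactly as in \eqref{eq:rank-one-word},
\[
\Wword_{N,M}(\boldsymbol f;u)=\mathbf q\,B_{f_1}\cdots B_{f_M}s\,\mathbf r .
\]
Because $\mathbf q$ is injective and $\mathbf r$ is surjective, the rank of the word equals the rank of $B_{f_1}\cdots B_{f_M}s$, and $s$ is invertible.

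Next, $B_f=\sum_k\widehat f(k)B_k$. Proposition~\ref{prop:lowering} then gives
\[
B_f\bigl(P(x)+\rho Q(x)\bigr)=\Dop_f^+P(x)+\rho\,\Dop_f^-Q(x),
\]
because the sign $(-1)^k$ attaches to the $\rho$-sector. Each $B_f$ is a linear combination of the $B_k$, so it also lowers the flag $\cK_m$. Lemma~\ref{lem:abstract-threshold} then yields \eqref{eq:functional-vanishing}.

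\textbf{Sharp layer.} I compute leading coefficients as in Lemma~\ref{lem:graded-product}. By \eqref{eq:lambda-r}, the coefficient of $X^{r-1}$ in $\Dop_f^+X^r$ is
\[
\sum_k\widehat f(k)\bigl(\omega^{kr}-\omega^{-kr}\bigr)=f(\omega^r)-f(\omega^{-r}).
\]
For $\Dop_f^-$ the coefficient is $\sum_k\widehat f(k)(-1)^k(\omega^{kr}-\omega^{-kr})=f(-\omega^r)-f(-\omega^{-r})$. In both cases all other terms have degree at most $r-2$.

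The induction of Lemma~\ref{lem:graded-product} then applies with $r=M=n-1$. The $j$-th operator acts on degree $r-M+j=j$, so
\[
B_{f_1}\cdots B_{f_{n-1}}x^{n-1}=C_N^+\,1,\qquad B_{f_1}\cdots B_{f_{n-1}}\rho x^{n-1}=C_N^-\,\rho .
\]
Meanwhile $\cK_{n-1}$ is killed after $n-1$ lowerings. Since $s\,a=1$, this gives the two certificates \eqref{eq:functional-certificate-plus} and \eqref{eq:functional-certificate-minus}. The image of the product is spanned by those of $C_N^+1$ and $C_N^-\rho$ that are nonzero, and $1,\rho$ are independent by Lemma~\ref{lem:parity-basis}. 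This proves \eqref{eq:functional-rank}.

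\textbf{Rank one.} The step needing care is the choice of $\boldsymbol f$. The obstacle is that at $j=n/2$ (when $n$ is even) the points $-\omega^{\pm j}$ coincide with $\omega^{\mp j}$. There the plus and minus factors are negatives of each other, so they cannot be separated at that position. I therefore separate at $j=1$.

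For $N\geq6$, i.e.\ $n\geq3$, the four points $\omega,\omega^{-1},-\omega=\omega^{n+1},-\omega^{-1}=\omega^{n-1}$ are distinct roots of unity. Functions on the $N$-th roots of unity are arbitrary elements of $\mathcal C_N$ by Lagrange interpolation. So I can choose $f_1$ with $f_1(\omega)\neq f_1(\omega^{-1})$ and $f_1(-\omega)=f_1(-\omega^{-1})$. For $2\leq j<n$ I take $f_j=X$, whose plus factor $\omega^j-\omega^{-j}$ is nonzero because $2j\not\equiv0\pmod{2n}$.

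Then $C_N^+\neq0$ and $C_N^-=0$, so the rank is one. The word contains $n$ copies of $\Lambda$ separated by elements of $A_{N,u}$, so it lies in $J^n$. This gives the final claim.
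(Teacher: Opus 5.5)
Your proposal is correct and follows the paper's proof essentially step for step: you use linearity of \eqref{eq:Bk-formula} in the coefficients $\widehat f(k)$, get the leading coefficients $f(\pm\omega^r)-f(\pm\omega^{-r})$ by summing \eqref{eq:lambda-r}, run the triangular induction on $x^{n-1}$ and $\rho x^{n-1}$, and transfer the rank through the injective $\mathbf q$, the surjective $\mathbf r$ and the invertible $s$. The only difference is the rank-one witness: you choose $f_1$ by interpolation so that $C_N^+\neq0=C_N^-$, whereas the paper takes the explicit $f_1(X)=X-\frac{\alpha}{\beta}X^2$ so that $C_N^+=0\neq C_N^-$, and both work for $N\geq6$.
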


\begin{proof}
Since $N$ is even, the signs $(-1)^k$ in \eqref{eq:Df-pm} are well defined
modulo $N$.  Linearity of \eqref{eq:Bk-formula} gives
\begin{equation}\label{eq:Bf-formula}
 B_f\bigl(P(x)+\rho Q(x)\bigr)
 =\Dop_f^+P(x)+\rho\Dop_f^-Q(x).
\end{equation}
Both polynomial operators lower degree by at least one.  Moreover, their
leading coefficients on $X^r$ are respectively
\begin{align}
 \lambda_r^+(f)&=f(\omega^r)-f(\omega^{-r}),\label{eq:functional-leading-plus}\\
 \lambda_r^-(f)&=f(-\omega^r)-f(-\omega^{-r}).
 \label{eq:functional-leading-minus}
\end{align}
Indeed, these identities follow by summing \eqref{eq:lambda-r} with weights
$\widehat f(k)$ and $(-1)^k\widehat f(k)$.

The block reduction extends verbatim because
\begin{equation}\label{eq:functional-rank-one-reduction}
 \mathbf r f(T)\mathbf q=f(t)-f(t^{-1}),\qquad
 \Wword_{N,M}(\boldsymbol f;u)
 =\mathbf q B_{f_1}\cdots B_{f_M}s\,\mathbf r.
\end{equation}
The common flag therefore proves \eqref{eq:functional-vanishing}.  At
$M=n-1$, the product annihilates $\cK_{n-1}$.  On the two complementary
basis vectors $x^{n-1}$ and $\rho x^{n-1}$, repeated use of
\eqref{eq:functional-leading-plus}--\eqref{eq:functional-leading-minus}
gives the two scalars \eqref{eq:functional-plus}--\eqref{eq:functional-minus}.
Their target vectors $1$ and $\rho$ are independent.  The rightmost $s$ is
invertible, while $\mathbf r$ is surjective and $\mathbf q$ is injective.
This proves the rank formula and the two certificate identities.

It remains to exhibit rank one.  Suppose $N\geq6$, and set
\[
 \alpha=\omega-\omega^{-1},\qquad
 \beta=\omega^2-\omega^{-2}.
\]
Both numbers are nonzero.  Take
\begin{equation}\label{eq:rank-one-functional-example}
 f_1(X)=X-\frac{\alpha}{\beta}X^2,\qquad
 f_j(X)=X\quad(2\leq j<n).
\end{equation}
The $j=1$ factor of $C_N^+$ is zero, whereas the corresponding factor of
$C_N^-$ is $-2\alpha$.  Every remaining factor is nonzero because
$1\leq j<n$.  Hence $C_N^+=0$ and $C_N^-\neq0$, so the word has rank one.
It contains $n$ copies of $\Lambda$ and is a product of $n$ elements of $J$;
therefore it belongs to $J^n$.
\end{proof}

The previous constant-exponent family is now an immediate specialization.

\begin{corollary}[Constant exponents]\label{cor:constant-classification}
For $\boldsymbol{k}=(k,\ldots,k)$ at $M=n-1$, the word is nonzero if and
only if $\gcd(k,n)=1$.  In that case
\begin{equation}\label{eq:sharp-block}
  C_N(k,\ldots,k)
  =\prod_{r=1}^{n-1}(\omega^{kr}-\omega^{-kr})
  =(-1)^{n-1}n\,\omega^{-kn(n-1)/2}.
\end{equation}
Consequently exactly $2\varphi(n)$ residue classes of $k$ modulo $N$
give constant-exponent witnesses.
\end{corollary}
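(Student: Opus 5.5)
The plan is to specialize Proposition~\ref{prop:sharp-classification} to the constant tuple $\boldsymbol k=(k,\ldots,k)$, and then evaluate the resulting cyclotomic product in closed form.

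\emph{Nonvanishing criterion.} By \eqref{eq:sharp-equivalences}, the word is nonzero if and only if $n\nmid jk$ for every $1\leq j<n$. Suppose $\gcd(k,n)=g>1$. Then $j=n/g$ lies in the range $1\leq j<n$ and $n\mid jk$, so the word vanishes. Conversely, suppose $\gcd(k,n)=1$. Then $n\mid jk$ forces $n\mid j$, which is impossible for $0<j<n$. Since the product $C_N(k,\ldots,k)$ has $j$th factor $\omega^{jk}-\omega^{-jk}$, this is exactly the first expression in \eqref{eq:sharp-block}, with $r=j$.

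\emph{Closed form.} I would factor each term as $\omega^{kr}-\omega^{-kr}=\omega^{-kr}(\omega^{2kr}-1)$. The prefactors multiply to $\omega^{-k\sum_{r=1}^{n-1} r}=\omega^{-kn(n-1)/2}$. The remaining product is $\prod_{r=1}^{n-1}(\zeta^{r}-1)$ with $\zeta=\omega^{2k}$, a primitive $n$th root of unity because $\gcd(k,n)=1$. As $r$ runs over $1,\ldots,n-1$, the powers $\zeta^r$ run over all nontrivial $n$th roots of unity. Hence
\[
\prod_{r=1}^{n-1}(\zeta^r-1)=(-1)^{n-1}\prod_{r=1}^{n-1}(1-\zeta^r)=(-1)^{n-1}\,n,
\]
using $\prod_{\xi^n=1,\ \xi\neq1}(X-\xi)=1+X+\cdots+X^{n-1}$ evaluated at $X=1$. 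Combining the two pieces gives \eqref{eq:sharp-block}.

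\emph{Count.} The residues $k$ modulo $2n$ with $\gcd(k,n)=1$ are the lifts of the $\varphi(n)$ units modulo $n$. Each unit has exactly two lifts modulo $2n$, and the condition depends only on $k \bmod n$, so there are $2\varphi(n)$ classes. No step is a serious obstacle. The only points needing care are reindexing the $j$th factor of $C_N$ as the $r$th factor, and the sign bookkeeping in the cyclotomic evaluation.
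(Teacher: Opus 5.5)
Your proposal is correct and follows the paper's proof essentially step for step. You specialize the sharp-layer criterion to the condition that $k$ is a unit modulo $n$, pull out $\omega^{-kr}$ from each factor and evaluate the rest via $\prod_{r=1}^{n-1}(1-\zeta^r)=n$, and count two lifts modulo $2n$ of each unit modulo $n$; the only cosmetic difference is that you take $\zeta=\omega^{2k}$ as a primitive $n$th root of unity, while the paper takes $\zeta=\omega^2$ and notes that multiplication by $k$ permutes the nonzero residues modulo $n$, which is the same fact.
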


\begin{proof}
The conditions $n\nmid jk$ for every $1\leq j<n$ hold exactly when $k$ is
a unit modulo $n$.  If so, put $\zeta=\omega^2$.  Multiplication by $k$
permutes the nonzero residue classes modulo $n$, and therefore
\begin{align*}
 C_N(k,\ldots,k)
 &=\omega^{-k\sum_{r=1}^{n-1}r}
   \prod_{r=1}^{n-1}(\zeta^{kr}-1)\\
 &=\omega^{-kn(n-1)/2}(-1)^{n-1}n,
\end{align*}
using $\prod_{r=1}^{n-1}(1-\zeta^r)=n$.  Every unit class modulo $n$ has
two lifts modulo $2n$, which gives the count.
\end{proof}

For the all-ones tuple, \eqref{eq:sharp-block} specializes to
\begin{equation}\label{eq:certificate-identity}
  B_1^{n-1}s\,v_N=n i^{\,n-1}\,1\neq0.
\end{equation}
The output is independent of $u$, so the word remains nonzero at every
admissible specialization.  Proposition~\ref{prop:vanishing} and
\eqref{eq:certificate-identity} prove the exact-threshold part of
Theorem~\ref{thm:main}.

\begin{remark}[Smallest case]
For $N=4$, one has $n=2$.  Proposition~\ref{prop:vanishing} starts at
$M=2$, while the sharp word is $\Lambda T\Lambda$ at $M=1$ and the
certificate scalar is $C_4(1)=\omega-\omega^{-1}=2i$.  More generally,
the sharp word $\Lambda T^{k_1}\Lambda$ is nonzero exactly when $k_1$ is
odd.  This includes the smallest admissible order.
\end{remark}

\section{The represented algebra and its radical filtration}
\label{sec:radical}

The structural half of the result is stated in the language of
finite-dimensional algebras.  Gabriel classified the indecomposable
representations of a quiver and attached to an algebra the quiver that now
carries his name \cite{Gabriel1972Unzerlegbare}.  Coxeter functors supplied a
second proof of that classification \cite{Bernstein_1973cf}.  The
classification of quadruples of subspaces is the other problem in linear
algebra from which the subject grew \cite{Gelfand_1972po}.  Every
finite-dimensional algebra is of tame or of wild representation type and never
of both \cite{Drozd_1980ta}.  Integral quadratic forms develop the tame side
\cite{Ringel_Book}, Auslander--Reiten sequences the representation-finite one
\cite{Gabriel_1980ar}, and representation type is preserved by stable
equivalence \cite{Krause_1997se}.  We use the finite-dimensional algebra
conventions of \cite{AssemSimsonSkowronski2006} throughout, read radical powers and Loewy layers off the module category in the standard
way \cite{Auslander_Book}, and write Gabriel quivers in the sense of
\cite{Gabriel1972Unzerlegbare}.

We now determine the finite-dimensional represented algebra.  Let
\[
 A=A_{N,u}\subseteq\operatorname{End}(\CC^{2N})
\]
be the unital algebra generated by $T$ and $\Lambda$, and put
\begin{equation}\label{eq:radical-ideal}
 J=A\Lambda A.
\end{equation}
For $a\in\ZZ/N\ZZ$, write $\bar a\in\{0,\ldots,N-1\}$ for its standard
representative and define its level by
\begin{equation}\label{eq:level-function}
 \ell(a)=\min\{\bar a,N-\bar a\}\in\{0,\ldots,n\}.
\end{equation}
Thus the level sizes are $1,2,\ldots,2,1$.

Let $p_a\in\operatorname{End}(R)$ be projection onto the monomial line
$\CC z^a$.  Since $tz^a=\omega^a z^a$, the spectral idempotent
\begin{equation}\label{eq:spectral-idempotent}
 e_a=\frac1N\sum_{k=0}^{N-1}\omega^{-ak}T^k
\end{equation}
projects $R\oplus R$ onto
\begin{equation}\label{eq:spectral-space}
 V_a=\CC(z^a,0)\oplus\CC(0,z^{-a}).
\end{equation}
The $e_a$ are pairwise orthogonal and sum to the identity.

The decisive additional identity is a finite Fourier expansion of $s$.
In the cyclic algebra $R$, one has
\begin{equation}\label{eq:s-fourier-expansion}
 s=\frac1{u^N-1}\sum_{h=0}^{n-1}u^{2h+1}z^{2h+1}.
\end{equation}
Indeed,
\begin{align*}
 (uz-u^{-1}z^{-1})\sum_{h=0}^{n-1}u^{2h+1}z^{2h+1}
 &=\sum_{h=0}^{n-1}u^{2h+2}z^{2h+2}
   -\sum_{h=0}^{n-1}u^{2h}z^{2h}\\
 &=u^Nz^N-1=u^N-1.
\end{align*}
For $q\in\ZZ/N\ZZ$, set $\sigma_q=0$ when $q$ is even.
If $q$ is odd, let $[q]\in\{1,3,\ldots,N-1\}$ be its unique odd
representative and set
\begin{equation}\label{eq:cq-definition}
 \sigma_q=\frac{u^{[q]}}{u^N-1}.
\end{equation}
Then
\begin{equation}\label{eq:s-matrix-coefficient}
 p_a s p_b=\sigma_{a-b}E_{a,b},
\end{equation}
where $E_{a,b}$ sends $z^b$ to $z^a$.  In particular, for odd $q$,
\begin{equation}\label{eq:cq-product}
 \sigma_q\sigma_{-q}=\frac{u^N}{(u^N-1)^2}.
\end{equation}

\begin{theorem}[Monotone-path presentation]
\label{thm:monotone-algebra}
For every admissible $u$, the idempotents $e_a$ are primitive.  For each
ordered pair $(a,b)$ with $\ell(a)\neq\ell(b)$ there is a nonzero element
$x_{a,b}\in e_aAe_b$ such that
\begin{equation}\label{eq:peirce-blocks}
 e_aAe_b=
 \begin{cases}
   \CC e_a,&a=b,\\
   0,&a\neq b\text{ and }\ell(a)=\ell(b),\\
   \CC x_{a,b},&\ell(a)\neq\ell(b).
 \end{cases}
\end{equation}
The elements may be normalized so that
\begin{equation}\label{eq:monotone-multiplication}
 x_{a,b}x_{b,c}=x_{a,c}
\end{equation}
when $\ell(a),\ell(b),\ell(c)$ are strictly increasing or strictly
decreasing, whereas the product is zero when the level direction changes.
Products with mismatched middle idempotents are zero.

Consequently, the isomorphism type of $A_{N,u}$ is independent of admissible
$u$.  Its Gabriel quiver \cite{Gabriel1972Unzerlegbare} has the residues
$a\in\ZZ/N\ZZ$ as vertices and one arrow in each direction between every
pair of vertices in consecutive levels.  Every path that changes level
direction is zero, and all monotone paths with the same endpoints agree
after the normalization above.
\end{theorem}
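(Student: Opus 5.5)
The plan is to compute everything blockwise in the decomposition $R\oplus R=\bigoplus_a V_a$ of \eqref{eq:spectral-space}. First I would note that $T$ acts on $V_a$ as the scalar $\omega^a$: indeed $tz^a=\omega^az^a$ and $t^{-1}z^{-a}=\omega^{a}z^{-a}$. Hence $e_aT^ke_b=\delta_{ab}\,\omega^{ak}e_a$. Since $A$ is generated by $T$ and $\Lambda$, the block $e_aAe_b$ is therefore spanned by $\delta_{ab}e_a$ together with the chains $\Lambda_{a c_1}\Lambda_{c_1c_2}\cdots\Lambda_{c_m b}$, where $\Lambda_{pq}:=e_p\Lambda e_q$. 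Everything thus reduces to understanding the $2\times2$ blocks $\Lambda_{pq}$, taken in the bases $\{(z^p,0),(0,z^{-p})\}$.

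For these blocks I would use $\Lambda=\mathbf q s\mathbf r$ together with \eqref{eq:s-matrix-coefficient}. The block $\Lambda_{pq}$ should be the matrix
\[
\begin{pmatrix}\sigma_{p-q}&\sigma_{p+q}\\-\sigma_{-p-q}&-\sigma_{-p+q}\end{pmatrix}.
\]
If $p-q$ is even, all four entries vanish, so $\Lambda_{pq}=0$. If $p-q$ is odd, all entries are nonzero, and the determinant vanishes by \eqref{eq:cq-product}, so $\Lambda_{pq}$ has rank one. I would write it as $\Lambda_{pq}=\gamma_{pq}\,\mathrm{col}_{p}(q)\,\mathrm{row}_{q}(p)$ and compute the column and row directions explicitly through $\sigma_q=u^{[q]}/(u^N-1)$.

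The key claim, and the step I expect to be the main obstacle, is the following. In $V_p$ there are only two possible image lines, an ``up'' line $U_p^{+}$ used when $\ell(q)>\ell(p)$ and a ``down'' line $U_p^{-}$ used when $\ell(q)<\ell(p)$. Dually, there are two possible row functionals. The ``direction-preserving'' row kills neither line relevant to a monotone continuation, while the ``reversing'' row annihilates the incoming image line. To establish this I would compare the exponents $[p-q]-[-p-q]$ and $[p+q]-[-p+q]$, separating the cases by the standard representatives $\bar p,\bar q$ and by which of them lies in $\{0,\ldots,n\}$. The bookkeeping of odd representatives is where the delicate case analysis sits. The check is then concrete: the row functional of $\Lambda_{bc}$ evaluated on the image vector of $\Lambda_{ab}$ should be zero exactly when the levels change direction at $b$. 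Alongside this I would show that a monotone step between levels differing by more than one factors through intermediate levels, up to a nonzero scalar.

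Granting that claim, the remaining parts follow in order.

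1. Any chain from $a$ to $a$ must change direction, so $e_aAe_a=\CC e_a$, and $e_a$ is primitive. Similarly, $e_aAe_b=0$ when $a\neq b$ but $\ell(a)=\ell(b)$.
2. For $\ell(a)\neq\ell(b)$, every surviving chain is monotone and is a multiple of one fixed rank-one operator $\mathrm{col}\otimes\mathrm{row}$. This gives the one-dimensional block $\CC x_{a,b}$, and $x_{a,b}\neq 0$ via a monotone chain through consecutive levels.
3. Normalize $x_{a,b}$ by fixing the col/row vectors, so that products of monotone composable pairs equal $x_{a,c}$ with coefficient one, and products of direction-changing pairs are zero.
4. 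The resulting structure constants involve no $u$, which gives independence of the isomorphism type.
5. The Gabriel quiver is read off as follows. $J=\bigoplus_{a\neq b}e_aAe_b$ is a nilpotent ideal, since the level is bounded in $\{0,\ldots,n\}$, and its quotient is $\prod_a\CC$. Hence $J$ is the radical, and the arrows are the irreducible elements $x_{a,b}$ with $|\ell(a)-\ell(b)|=1$.
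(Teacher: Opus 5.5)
Your proposal is correct and follows the paper's proof. Both use the same $2\times2$ blocks $e_a\Lambda e_b$, which have rank one with image and row lines depending only on whether the neighbouring level is higher or lower. Both then show the pairing vanishes exactly at a change of direction, which yields a $u$-free multiplication table after rescaling.

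The case analysis you defer is the one the paper carries out. For $a=i$ it splits into the ranges $\bar b<i$, $i<\bar b<N-i$ and $\bar b>N-i$, and adds the endpoint lines $(1,-1)^{T}$ and $(1,1)$ at levels $0$ and $n$. This computation gives monotone pairings $1-u^{\mp\varepsilon N}$, which are nonzero because $u^N\neq1$ and are the constants you divide by to get coefficient one. Finally, the pairing at $b$ is the row of $\Lambda_{ab}$ against the column of $\Lambda_{bc}$, not the reverse as you wrote.
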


\begin{proof}
In the ordered bases of $V_b$ and $V_a$ from
\eqref{eq:spectral-space}, equations
\eqref{eq:generators} and \eqref{eq:s-matrix-coefficient} give
\begin{equation}\label{eq:lambda-peirce-block}
 e_a\Lambda e_b=
 \begin{pmatrix}
   \sigma_{a-b}&\sigma_{a+b}\\
   -\sigma_{-a-b}&-\sigma_{-a+b}
 \end{pmatrix}.
\end{equation}
If $a$ and $b$ have the same parity, this matrix is zero.  If they have
opposite parity, every entry is nonzero and its determinant vanishes by
\eqref{eq:cq-product}.  Thus every nonzero one-$\Lambda$ Peirce block has
rank one.  Parity is unchanged by $a\mapsto-a$, so it is also the parity of
$\ell(a)$.

The rank-one directions make the multiplication transparent.  Write an
interior residue as $a=\varepsilon i$, where $1\leq i<n$ and
$\varepsilon\in\{1,-1\}$.  In the basis of $V_a$, maps entering level $i$
from a higher or lower level have respective column lines
\begin{equation}\label{eq:local-column-lines}
 v_a^{H}=\binom{1}{-u^{-2\varepsilon i}},
 \qquad
 v_a^{L}=\binom{1}{-u^{\varepsilon(N-2i)}},
\end{equation}
while maps leaving level $i$ toward a lower or higher level have respective
row lines
\begin{equation}\label{eq:local-row-lines}
 w_a^{L}=\begin{pmatrix}1&u^{\varepsilon(2i-N)}\end{pmatrix},
 \qquad
 w_a^{H}=\begin{pmatrix}1&u^{2\varepsilon i}\end{pmatrix}.
\end{equation}
These formulas follow by factoring \eqref{eq:lambda-peirce-block}.  The
image of that block is spanned by the first column
$(\sigma_{a-b},-\sigma_{-a-b})$, and its ratio is
$-\sigma_{-a-b}/\sigma_{a-b}=-u^{[-a-b]-[a-b]}$ by
\eqref{eq:cq-definition}.  Take $a=i$ with $1\leq i<n$ and $b$ of the
opposite parity.  For $\bar b<i$ the odd representatives are
$[a-b]=i-\bar b$ and $[-a-b]=N-i-\bar b$; for $i<\bar b<N-i$ they are
$[a-b]=N+i-\bar b$ and $[-a-b]=N-i-\bar b$; for $\bar b>N-i$ they are
$[a-b]=N+i-\bar b$ and $[-a-b]=2N-i-\bar b$.  The ratio is therefore
$-u^{-2i}$ in the middle case, where $\ell(b)>\ell(a)$, and $-u^{N-2i}$
in the two outer cases, where $\ell(b)<\ell(a)$, whatever the vertex $b$
in the other level.  The row space of the block $e_b\Lambda e_a$ leaving
$a$ is spanned by $(\sigma_{b-a},\sigma_{b+a})$, and the same three cases
give $\sigma_{b+a}/\sigma_{b-a}=u^{2i}$ for $\ell(b)>\ell(a)$ and
$u^{2i-N}$ for $\ell(b)<\ell(a)$.  Replacing $a$ by $-a$ interchanges
$[a-b]$ with $[-a-b]$ and $[b-a]$ with $[b+a]$, and so negates every
exponent, which is the case $\varepsilon=-1$.  The four pairings of the lines are
\begin{align}
 w_a^{L}v_a^{H}&=1-u^{-\varepsilon N}\neq0,&
 w_a^{L}v_a^{L}&=0,\label{eq:local-pairings-down}\\
 w_a^{H}v_a^{L}&=1-u^{\varepsilon N}\neq0,&
 w_a^{H}v_a^{H}&=0.\label{eq:local-pairings-up}
\end{align}
The inequalities are precisely where admissibility is used.  For the
endpoint maps we use
\[
 v_0^H=v_n^L=\binom{1}{-1},
 \qquad
 w_0^H=w_n^L=\begin{pmatrix}1&1\end{pmatrix},
\]
and the vanishing pairings $w_0^Hv_0^H=0$ and $w_n^Lv_n^L=0$ show that a
change of direction at the endpoint levels $0$ and $n$ is again zero.

For vertices on distinct levels, define rank-one maps
\begin{equation}\label{eq:Xab-definition}
 X_{a,b}=
 \begin{cases}
   v_a^{H}w_b^{L},&\ell(a)<\ell(b),\\
   v_a^{L}w_b^{H},&\ell(a)>\ell(b),
 \end{cases}
\end{equation}
using the endpoint lines just displayed.  For adjacent levels,
\eqref{eq:lambda-peirce-block} is a nonzero scalar multiple of
$X_{a,b}$, so $X_{a,b}\in A$.  Products along a strictly monotone chain
are nonzero by
\eqref{eq:local-pairings-down}--\eqref{eq:local-pairings-up}; hence they
produce $X_{a,b}$ for arbitrary
distinct endpoint levels.  Conversely, every nonzero
$e_a\Lambda e_b$ is a scalar multiple of the appropriate $X_{a,b}$.
Since $T=\sum_a\omega^ae_a$, and since the product rule of the next
paragraph shows the span below to be closed under multiplication, the
algebra generated by $T$ and $\Lambda$ is exactly the span of
\begin{equation}\label{eq:peirce-basis}
 \{e_a:a\in\ZZ/N\ZZ\}
 \cup
 \{X_{a,b}:\ell(a)\neq\ell(b)\}.
\end{equation}
The displayed elements lie in distinct Peirce blocks and are therefore
linearly independent.  This proves \eqref{eq:peirce-blocks}, and also
shows that every $e_a$ is primitive.

The pairings show that $X_{a,b}X_{b,c}$ is a nonzero scalar multiple of
$X_{a,c}$ exactly for a strictly monotone triple of levels, and is zero
after a change of direction.  At an interior source $b=\varepsilon i$,
divide a downward map $X_{a,b}$ by $1-u^{-\varepsilon N}$ and an upward
map by $1-u^{\varepsilon N}$.  Divide every map out of an endpoint source
by one common nonzero constant, say $1$.  In every monotone product the intermediate factor
cancels, giving \eqref{eq:monotone-multiplication}.  The resulting
multiplication table contains no $u$, proving parameter independence.

Finally, by Corollary~\ref{prop:radical}, whose proof uses only the
multiplication table just established, $J=\operatorname{Jac}(A)$ and $J^2$
is spanned by the $x_{a,b}$ with $|\ell(a)-\ell(b)|\geq2$.  Hence
$e_a(J/J^2)e_b$ is nonzero exactly when the level distance is one, which
gives the arrows of the Gabriel quiver.  Under the stated relations every
path that changes level direction is zero, and all monotone paths with the
same endpoints coincide.  The path algebra modulo these relations is
therefore spanned by the trivial path at each vertex and by one monotone
path for each ordered pair of vertices on distinct levels.  Their images in $A$ are nonzero
scalar multiples of the elements \eqref{eq:peirce-basis}, which are
linearly independent, so no further relation holds.
\end{proof}

\begin{corollary}[Full radical filtration]\label{prop:radical}
For $1\leq r\leq n$,
\begin{equation}\label{eq:radical-power-basis}
 J^r=\bigoplus_{|\ell(a)-\ell(b)|\geq r}\CC x_{a,b}.
\end{equation}
In particular,
\begin{align}
 \dim A&=N^2-N+2,\label{eq:algebra-dimension}\\
 \dim J^r&=2+4(n-r)(n-r+1)\qquad(1\leq r\leq n),
 \label{eq:radical-power-dimension}\\
 \dim A/J&=N,\qquad
 \dim J^r/J^{r+1}=8(n-r)\quad(1\leq r<n),
 \qquad \dim J^n=2.\label{eq:loewy-layer-dimensions}
\end{align}
Moreover,
\begin{equation}\label{eq:radical-result}
 J=\operatorname{Jac}(A),\qquad J^n\neq0,\qquad J^{n+1}=0.
\end{equation}
Thus the Loewy length of $A$, the least $r$ for which $J^r=0$, is $n+1$.
For each $a$, the $r$th radical layer of the indecomposable right
projective $e_aA$ contains once each simple indexed by a residue $b$ with
$|\ell(a)-\ell(b)|=r$, and no other simple.
\end{corollary}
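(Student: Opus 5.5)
The plan is to work entirely inside the multiplication table from Theorem~\ref{thm:monotone-algebra}, taking the basis $\{e_a\}\cup\{x_{a,b}:\ell(a)\neq\ell(b)\}$ and the monotone product rule as given, and to argue only by level distance. Put $I_r=\bigoplus_{|\ell(a)-\ell(b)|\geq r}\CC x_{a,b}$ for $r\geq1$, with $I_r=0$ for $r>n$.

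First I show $J=I_1$. The block $e_a\Lambda e_b$ is zero unless $a$ and $b$ have opposite parity. Opposite parity forces $\ell(a)\neq\ell(b)$, so $\Lambda=\sum_{a,b}e_a\Lambda e_b\in I_1$. The product rule shows $I_1$ is a two-sided ideal: a product with some $e_c$ either returns an $x$ or gives zero, and a product of two $x$'s is zero or another $x$. Hence $J\subseteq I_1$. For the reverse inclusion, take adjacent levels. There $e_a\Lambda e_b$ is a nonzero multiple of $x_{a,b}$, since the parities of adjacent levels differ. A general $x_{a,b}$ is a monotone product of adjacent-level elements, so it lies in $J$.

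Next I show $J^r=I_r$ by induction on $r$.
\begin{itemize}
\item For $I_r\subseteq J^r$: when $|\ell(a)-\ell(b)|\geq r$, choose a strictly monotone chain of levels from $a$ to $b$ with at least $r$ steps. This can be done, for example, with $r-1$ unit steps followed by one long step. The corresponding product $x_{a,c_1}x_{c_1,c_2}\cdots=x_{a,b}$ then exhibits $x_{a,b}\in J^r$.
\item For $J^r\subseteq I_r$: it suffices that $I_1I_{r-1}\subseteq I_r$. A nonzero product $x_{a,b}x_{b,c}$ requires the levels to be strictly monotone. In that case $|\ell(a)-\ell(c)|=|\ell(a)-\ell(b)|+|\ell(b)-\ell(c)|$, so the level distances add.
\end{itemize}
Since level distances are at most $n$, this gives $J^{n+1}=0$. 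It also gives $J^n=\CC x_{a,b}\oplus\CC x_{b,a}$ with $\{a,b\}=\{0,n\}$, which is nonzero and has dimension $2$.

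For the Jacobson radical: $J$ is nilpotent, so $J\subseteq\operatorname{Jac}(A)$. Also $A/J\cong\CC^N$ is spanned by the orthogonal images of the $e_a$, so it is semisimple, which gives equality.

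The dimension formulas then reduce to counting ordered pairs $(a,b)$ by level distance, using the level sizes $1,2,\ldots,2,1$. Count pairs at distance exactly $d$. The pair of endpoint levels $\{0,n\}$ contributes $2$ at $d=n$. A pair consisting of one endpoint level and one interior level contributes $2\cdot1\cdot2=4$ ordered pairs. A pair of interior levels contributes $8$. Both end levels together give $8$ endpoint–interior ordered pairs per distance $d<n$, and there are $n-1-d$ interior–interior level pairs at distance $d$. This yields $8(n-d)$ for $1\leq d<n$. Summing gives $\dim J^r=2+\sum_{d=r}^{n-1}8(n-d)=2+4(n-r)(n-r+1)$, and adding $N$ idempotents gives $\dim A=N+2+4(n-1)n=N^2-N+2$.

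For the layers of $e_aA$: $e_aJ^r/e_aJ^{r+1}$ has basis the $x_{a,b}$ with $|\ell(a)-\ell(b)|=r$, and each of these spans the simple top of $e_bA$.

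I expect the main care point to be the inclusion $J^r\subseteq I_r$, specifically avoiding circularity. Theorem~\ref{thm:monotone-algebra} cites this corollary only for the Gabriel-quiver statement, so I use only its basis and multiplication parts. The rest is counting.
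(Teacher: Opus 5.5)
Your proposal is correct and follows essentially the paper's own argument. You identify $J$ with the span of the $x_{a,b}$ using the Peirce blocks of $\Lambda$ and products of adjacent-level elements, and you get $J^r$ from the additivity of level distance along strictly monotone chains. You identify the radical through nilpotence together with $A/J\cong\CC^N$, and you count ordered pairs by level distance. The differences are only bookkeeping: you prove $J=I_1$ before the Jacobson-radical identification rather than after it, and you split the count into endpoint--interior and interior--interior level pairs instead of writing $2\sum_i\mu_i\mu_{i+r}$.
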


\begin{proof}
Let $J_+$ be the span of the $x_{a,b}$.  The monotone multiplication rule
makes $J_+$ a nilpotent ideal, while
$A/J_+\cong\CC^N$ is semisimple.  Hence $J_+=\operatorname{Jac}(A)$.
It contains $\Lambda$ by \eqref{eq:lambda-peirce-block}.  Conversely, all
adjacent $x_{a,b}$ lie in $A\Lambda A$, and every other $x_{a,b}$ is a
monotone product of adjacent ones.  Thus $J_+=A\Lambda A=J$.

A nonzero product of $r$ radical elements has a strictly monotone level
sequence, so its endpoint levels differ by at least $r$.  Conversely, if
two levels differ by at least $r$, inserting $r-1$ strictly intermediate
levels gives a nonzero product equal to the endpoint map.  This proves
\eqref{eq:radical-power-basis}.

Put $\mu_0=\mu_n=1$ and $\mu_i=2$ for $1\leq i<n$.  The number of ordered
pairs on different levels is
\[
 N^2-\sum_{i=0}^n\mu_i^2=N^2-2N+2.
\]
Adding the $N$ diagonal idempotents proves
\eqref{eq:algebra-dimension}.  For $1\leq r<n$, the number of ordered
vertex pairs at exact level distance $r$ is
\[
 2\sum_{i=0}^{n-r}\mu_i \mu_{i+r}=8(n-r),
\]
while the two endpoint pairs are the only pairs at distance $n$.  Summing
the exact-distance counts from $r$ through $n$ proves
\eqref{eq:radical-power-dimension}--\eqref{eq:loewy-layer-dimensions}.
Equation
\eqref{eq:radical-result} follows immediately.  Left multiplication by
$e_a$ in \eqref{eq:radical-power-basis} gives the projective layers.
\end{proof}

\begin{corollary}[Center, Cartan matrix, and representation type]
\label{cor:representation-type}
The center of $A$ is $\CC$.  With
$C_{a,b}=\dim_{\CC}e_aAe_b$, its Cartan matrix is
\begin{equation}\label{eq:cartan-matrix}
 C_{a,b}=
 \begin{cases}
  1,&a=b\text{ or }\ell(a)\neq\ell(b),\\
  0,&a\neq b\text{ and }\ell(a)=\ell(b).
 \end{cases}
\end{equation}
It has rank $N-1$ and kernel spanned by $\delta_0-\delta_n$, where
$\delta_a$ is the coordinate vector indexed by $a$.  Consequently,
\begin{equation}\label{eq:infinite-global-dimension}
 \operatorname{gldim}A=\infty,
\end{equation}
and $A$ has infinite representation type.  In fact, $A$ has a quotient
isomorphic to the path algebra of the quiver with two sources $p_1,p_2$, two
sinks $q_1,q_2$, and one arrow $p_i\to q_j$ for every $i,j\in\{1,2\}$.
\end{corollary}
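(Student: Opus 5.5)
Everything here will be read off the presentation in Theorem~\ref{thm:monotone-algebra} and the radical filtration in Corollary~\ref{prop:radical}.

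\emph{Center.} Let $z\in Z(A)$. Since $ze_a=e_az$ for every $a$, $z$ lies in $\bigoplus_a e_aAe_a=\bigoplus_a\CC e_a$, so $z=\sum_a\lambda_ae_a$. Commuting $z$ with an arrow $x_{a,b}$ between consecutive levels gives $\lambda_a=\lambda_b$. The quiver is connected (levels $0,1,\ldots,n$ are joined consecutively, and each level is joined to all vertices of the neighbouring levels), so all $\lambda_a$ are equal and $Z(A)=\CC$.

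\emph{Cartan matrix.} Its entries are read directly from \eqref{eq:peirce-blocks}. Order the vertices by level and write $C=\mathbf 1\mathbf 1^{\top}-D$. Here $D$ is block diagonal: its block on level $i$ is the all-ones $\mu_i\times\mu_i$ matrix minus the identity. That block is $0$ at the endpoint levels and $\left(\begin{smallmatrix}0&1\\1&0\end{smallmatrix}\right)$ at interior levels. Put $v=\delta_0-\delta_n$. Then $\mathbf 1^{\top}v=0$ and $Dv=0$, so $Cv=0$. For the rank, suppose $Cw=0$. Then $Dw=(\mathbf 1^{\top}w)\mathbf 1$. At an interior level with coordinates $w_1,w_2$ this reads $w_2=w_1=c:=\mathbf 1^{\top}w$. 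At the endpoints it gives $0=c$. Hence $c=0$, so every interior coordinate vanishes and $w_0+w_n=c=0$. The kernel is therefore exactly $\CC v$, and the rank is $N-1$.

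\emph{Global dimension.} A finite-dimensional algebra of finite global dimension has Cartan determinant $\pm1$, since the classes of the projectives and of the simples then form bases of the same Grothendieck group and $C$ is the transition matrix between them (Eilenberg). Because $\det C=0$, we get $\operatorname{gldim}A=\infty$. I expect this invocation to be the main point needing a citation rather than a computation.

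\emph{Representation type.} Take $p_1,p_2$ to be the two vertices of level $1$ and $q_1,q_2$ the two vertices of level $2$; this needs $n\geq2$, which holds. Let $e=e_{p_1}+e_{p_2}+e_{q_1}+e_{q_2}$ and pass to $eAe$. By \eqref{eq:peirce-blocks}, $eAe$ has basis the four $e$'s together with $x_{p,q}$ and $x_{q,p}$ for the eight level-1/level-2 pairs. Every product of two such $x$'s passes through a single level-1 or level-2 vertex and therefore changes direction, so it is zero. The elements $x_{q_j,p_i}$ span a two-sided ideal of $eAe$, and the quotient by it is the path algebra of the bipartite quiver with four arrows $p_i\to q_j$ (in the paper's orientation convention). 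Its underlying graph is the $4$-cycle $\tilde A_3$, which is not Dynkin, so by Gabriel's theorem this quotient has infinitely many indecomposables.

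It remains to pass this back to $A$. Modules over the quotient are $eAe$-modules, and $M\mapsto M\otimes_{eAe}eA$ is a fully faithful functor $\mathrm{mod}\,eAe\to\mathrm{mod}\,A$ with $(M\otimes eA)e\cong M$, so it preserves indecomposability and distinctness. Hence $A$ has infinite representation type. To realize the statement's ``quotient of $A$'' literally, I would alternatively take $A/A(1-e)A$. The extra paths through other vertices all change direction or pass through levels outside $\{1,2\}$, so I expect this quotient to be $eAe$; if that identification is delicate, I would fall back on the idempotent-subalgebra argument just given. Quotienting further by the arrows $q\to p$ then yields the stated path algebra.
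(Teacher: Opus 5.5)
Your arguments for the center, the Cartan matrix, and $\operatorname{gldim}A=\infty$ are correct and are essentially the paper's. Writing $C=\mathbf 1\mathbf 1^{\top}-D$ is a tidy repackaging of the paper's row equations $y_a+Y-g_i=0$.

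\textbf{The gap: the case $N=4$.} The gap is in the representation-type step. You assert that choosing $p_1,p_2$ at level~$1$ and $q_1,q_2$ at level~$2$ ``needs $n\geq2$, which holds''. But only interior levels have two vertices ($\mu_0=\mu_n=1$), so level~$2$ has two vertices only when $n\geq3$. For $n=2$, level~$2$ is the endpoint level $\{2\}$, a single vertex. Your idempotent $e$ then does not exist, and the corollary, which is claimed for every even $N\geq4$, is left unproved at $N=4$.

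Nor does your mechanism transplant directly. The only available bipartite configuration is the paper's: sources the endpoints $0,2$ and sinks the level-$1$ vertices $1,3$. Then $e=1$, and your key claim that every product of two arrows in $eAe$ changes direction is false. Indeed $x_{0,1}x_{1,2}=x_{0,2}$ and $x_{2,1}x_{1,0}=x_{2,0}$ are strictly monotone and nonzero.

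\textbf{The fix.} Quotient $A$ by the ideal generated by the four arrows of one orientation between level~$1$ and the endpoints. Each endpoint map is a monotone product containing one arrow of each orientation, so it lies in this ideal. Every other product through a generator changes direction. Hence the ideal is spanned by those four arrows together with $x_{0,2}$ and $x_{2,0}$. The quotient is then $8$-dimensional (recall $\dim A=14$) and is exactly the four-arrow path algebra. This is the separate case the paper's proof treats.

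\textbf{The case $n\geq3$.} Here your route is valid but differs from the paper's. The paper writes down an explicit family of $(1,1,1,1)$-dimensional representations, distinguished by the cross-ratio $\alpha_{11}\alpha_{22}/(\alpha_{12}\alpha_{21})$, and inflates it along the quotient map. You instead invoke Gabriel's theorem for $\tilde A_3$ and transfer along the fully faithful functor $-\otimes_{eAe}eA$.

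You also need not hedge on the literal quotient. Both levels~$1$ and~$2$ lie entirely in $e$, so every vertex $c$ outside $e$ has level $0$ or at least $3$. Then no path $a\to c\to b$ with $\ell(a),\ell(b)\in\{1,2\}$ is strictly monotone. Hence $eA(1-e)Ae=0$ and $A/A(1-e)A\cong eAe$, which gives the stated quotient of $A$ directly.
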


\begin{proof}
Let $Z$ be a central element.  Commutation with the primitive idempotents
gives $e_aZe_b=0$ for $a\neq b$, while
$e_aAe_a=\CC e_a$.  Hence $Z=\sum_a\nu_ae_a$.  For every pair of
vertices in consecutive levels, centrality and $x_{a,b}\neq0$ give
\[
 \nu_ax_{a,b}=Zx_{a,b}=x_{a,b}Z=\nu_bx_{a,b}.
\]
The Gabriel quiver is connected, so all $\nu_a$ agree.  Thus
$Z(A)=\CC$.

Formula \eqref{eq:cartan-matrix} follows from
\eqref{eq:peirce-blocks}.  To compute its kernel, suppose $Cy=0$ and put
\[
 Y=\sum_a y_a,
 \qquad g_i=\sum_{\ell(a)=i}y_a,
 \qquad \mu_i=\#\{a:\ell(a)=i\}.
\]
The row equation at a vertex $a$ of level $i$ is
$y_a+Y-g_i=0$.  Thus the coordinates are constant on each level, and
summation over that level yields
\[
 (1-\mu_i)g_i=-\mu_i Y.
\]
Since $\mu_0=\mu_n=1$, the endpoint equations give $Y=0$.  Since
$\mu_i=2$ for $0<i<n$, every interior coordinate vanishes, and the two
endpoint coordinates sum to zero.  This proves the rank and kernel claims.

If $A$ had finite global dimension, every simple module would have a finite
projective resolution.  In the Grothendieck group, the simple classes would
then be integral linear combinations of the indecomposable-projective
classes, making the Cartan matrix invertible over $\ZZ$
\cite{AssemSimsonSkowronski2006}.  Its computed kernel rules this out and
proves \eqref{eq:infinite-global-dimension}.

It remains to construct the quotient.  If $n\geq3$, retain the two vertices
at level $1$, the two vertices at level $2$, and the four arrows directed
from level $1$ to level $2$.  Quotient by all other vertices and arrows.  If
$n=2$, retain the two endpoints as sources, the two level-$1$ vertices as
sinks, and the four arrows directed from the endpoints to level $1$; quotient
by the reverse arrows.  In the latter case the two endpoint maps also vanish
in the quotient, since every representing length-two path uses a discarded
reverse arrow.  The presentation in Theorem~\ref{thm:monotone-algebra} shows
in both cases that the quotient is precisely the stated four-arrow path
algebra.

For $\lambda\in\CC^\times$, put a one-dimensional vector space at every
vertex of this quotient and assign the arrow scalars
\[
 \alpha_{11}=\alpha_{12}=\alpha_{21}=1,
 \qquad \alpha_{22}=\lambda.
\]
Every endomorphism is scalar because all four arrow maps are nonzero, so the
module is indecomposable.  Under changes of basis at the four vertices, the
cross-ratio
\[
 \frac{\alpha_{11}\alpha_{22}}{\alpha_{12}\alpha_{21}}=\lambda
\]
is invariant.  Distinct values of $\lambda$ therefore give pairwise
nonisomorphic indecomposable modules.  Inflating them along the quotient map
proves that $A$ has infinite representation type.
\end{proof}

\section{An arbitrary-field filtered extension}\label{sec:axiomatic-extension}

The preceding proofs use only a small amount of structure.  We record it
explicitly, both to separate the mechanism from the complex root-of-unity
realization and to make clear which hypotheses are responsible for each
conclusion.

\begin{definition}[Chebyshev-filtered datum]\label{def:chebyshev-datum}
Let $\Bbbk$ be a field, let $E$ be finite-dimensional over $\Bbbk$, let
$n\ge2$, and let $R,\sigma\in\operatorname{GL}(E)$.  Set
\[
 C=R+R^{-1},\qquad D=\sigma(R-R^{-1}),
\]
and, on $E\oplus E$, define
\[
 \Theta=\begin{pmatrix}R^{-1}&0\\ R-R^{-1}&R\end{pmatrix},\qquad
 \Xi=\begin{pmatrix}0&\sigma\\0&0\end{pmatrix},\qquad
 \mathcal G=\Bbbk\langle I,\Theta,\Xi\rangle,
 \qquad \mathcal J=\mathcal G\Xi\mathcal G.
\]
We call $(E,R,\sigma,F_\bullet)$ a length-$n$ Chebyshev-filtered datum if
there is a flag $0=F_0\subset\cdots\subset F_n=E$ such that
\begin{enumerate}[label=(H\arabic*),leftmargin=*,itemsep=1pt]
\item $CF_r\subseteq F_r$;
\item $DF_r\subseteq F_{r-1}$;
\item $D^{\,n-1}\ne0$;
\item $\Bbbk[\Theta,\Theta^{-1}]$ is semisimple.
\end{enumerate}
Here the invariance and lowering conditions hold for $1\leq r\leq n$.
No dimension is prescribed for the successive flag quotients.
\end{definition}

\begin{lemma}[Block powers]\label{lem:axiomatic-blocks}
Let $q_k$ be the integer polynomials determined by $q_0=0$, $q_1=1$, and
$q_{k+1}(x)=xq_k(x)-q_{k-1}(x)$ (extended to negative $k$ by the same
recurrence).  Then
\[
 R^k-R^{-k}=(R-R^{-1})q_k(C),\qquad
 \Theta^k=\begin{pmatrix}R^{-k}&0\\R^k-R^{-k}&R^k\end{pmatrix},
\]
and
\begin{equation}\label{eq:axiomatic-sandwich}
 \Xi\Theta^k\Xi=
 \begin{pmatrix}0&Dq_k(C)\sigma\\0&0\end{pmatrix}.
\end{equation}
\end{lemma}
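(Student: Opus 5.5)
The plan is to prove the three displayed identities in order, each feeding the next.

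\emph{Step 1: Chebyshev identity.} For $R^k-R^{-k}=(R-R^{-1})q_k(C)$ I would induct on $k$ in both directions. The case $k=0$ is trivial, since $q_0=0$, and $k=1$ holds since $q_1=1$. For the inductive step, the key relation is
\[
 (R+R^{-1})(R^k-R^{-k})=(R^{k+1}-R^{-(k+1)})+(R^{k-1}-R^{-(k-1)}).
\]
So $Y_k:=R^k-R^{-k}$ satisfies $Y_{k+1}=CY_k-Y_{k-1}$, which is the same recurrence as $q_k$. Because $C$ commutes with $R$, the sequence $(R-R^{-1})q_k(C)$ obeys this recurrence too. The two sequences agree at $k=0,1$, so they agree for all $k\ge0$. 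For negative $k$, run the recurrence backwards, $Y_{k-1}=CY_k-Y_{k+1}$, which the extended $q_k$ also satisfy. No invertibility of $R-R^{-1}$ is needed.

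\emph{Step 2: powers of $\Theta$.} The block-triangular formula for $\Theta^k$ I would first prove for $k\ge0$ by induction. Multiplying $\Theta^k$ on the left by $\Theta$, the lower-left entry becomes
\[
 (R-R^{-1})R^{-k}+R(R^k-R^{-k})=R^{k+1}-R^{-(k+1)}.
\]
The diagonal entries become $R^{-(k+1)}$ and $R^{k+1}$. For negative exponents, I would check directly that the proposed matrix with $-k$ inverts the one with $k$. The lower-left entry of the product is
\[
 (R^{-k}-R^{k})R^{-k}+R^{-k}(R^k-R^{-k})=0,
\]
since all these operators commute. By uniqueness of inverses, the formula then holds for all $k\in\ZZ$.

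\emph{Step 3: the sandwich.} Here I compute $\Xi\Theta^k\Xi$ blockwise. The factor $\Xi$ has only its $(1,2)$ block, equal to $\sigma$. Hence $\Xi M\Xi$ has the single $(1,2)$ block $\sigma M_{21}\sigma$, where $M_{21}$ is the lower-left block of $M$. For $M=\Theta^k$ this block is
\[
 \sigma(R^k-R^{-k})\sigma=\sigma(R-R^{-1})q_k(C)\sigma=Dq_k(C)\sigma,
\]
by Step 1 and the definition $D=\sigma(R-R^{-1})$.

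The main obstacle is only bookkeeping. I need to keep the order of $\sigma$ correct, since $\sigma$ need not commute with $R$, and note that the commutation of $C$ with $R$ is what justifies moving $q_k(C)$ past $R-R^{-1}$. The negative-$k$ extension of the recurrence must also be stated consistently.
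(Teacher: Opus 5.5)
Your proposal is correct and follows essentially the same route as the paper: the Chebyshev recurrence together with the commutativity of $C$ with $R$, induction on $k$ by multiplying by $\Theta$, and direct block multiplication for the sandwich. The only cosmetic difference is in the negative exponents: the paper multiplies repeatedly by the explicit inverse of $\Theta$, whereas you check directly that the matrix for $-k$ inverts the one for $k$, and the two arguments are equivalent.
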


\begin{proof}
The first identity follows from the recurrence and the fact that $R$ commutes
with $R^{-1}$.  The block formula is immediate for $k=0,1$ and is preserved
by multiplication by $\Theta$; multiplication by the explicit inverse of
$\Theta$ gives negative $k$.  Equation~\eqref{eq:axiomatic-sandwich} follows
by direct block multiplication.
\end{proof}

\begin{theorem}[Axiomatic threshold and radical]\label{thm:axiomatic-main}
For every length-$n$ Chebyshev-filtered datum and every integer tuple,
\[
 \Xi\Theta^{k_1}\Xi\cdots\Xi\Theta^{k_M}\Xi=0\quad(M\ge n),
\]
whereas the all-ones word at $M=n-1$ is nonzero.  Moreover,
\[
 \mathcal J=\operatorname{rad}\mathcal G,\qquad
 \mathcal J^{\,n}\ne0,\qquad \mathcal J^{\,n+1}=0,
\]
so $\mathcal G$ has exactly $n+1$ nonzero Loewy layers.
\end{theorem}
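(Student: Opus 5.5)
The plan is to reduce every chained word to a single off-diagonal block, in the spirit of Section~\ref{sec:reduction}, and then apply Lemma~\ref{lem:abstract-threshold} with the flag $F_\bullet$. Since $\Xi$ has its only nonzero block $\sigma$ in position $(1,2)$, for any block matrix $X=(X_{ij})$ one has $\Xi X\Xi=\begin{pmatrix}0&\sigma X_{21}\sigma\\0&0\end{pmatrix}$. Lemma~\ref{lem:axiomatic-blocks} gives $(\Theta^k)_{21}=R^k-R^{-k}=(R-R^{-1})q_k(C)$. An induction on $M$ then shows
\[
 \Xi\Theta^{k_1}\Xi\cdots\Xi\Theta^{k_M}\Xi
 =\begin{pmatrix}0&Dq_{k_1}(C)\,Dq_{k_2}(C)\cdots Dq_{k_M}(C)\,\sigma\\0&0\end{pmatrix}.
\]
The key point of the induction is that the $(1,2)$ block of the partial word is all that feeds into the next sandwich, and this is exactly the block that \eqref{eq:axiomatic-sandwich} computes.

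For the threshold, take $V=E$, $S=\sigma$, and the family $B_k=Dq_k(C)$, $k\in\ZZ$. By (H1), $q_k(C)$ preserves every $F_r$, and by (H2), $D$ then drops the level by one. Hence $B_kF_r\subseteq F_{r-1}$, and Lemma~\ref{lem:abstract-threshold} kills every word with $M\geq n$. For the all-ones word at $M=n-1$, note $q_1=1$, so the block is $D^{n-1}\sigma$. This is nonzero because $\sigma$ is invertible and $D^{n-1}\neq0$ by (H3).

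For the radical, first I note that $\Theta^{-1}\in\mathcal G$, as a polynomial in $\Theta$ (Cayley--Hamilton with $\Theta$ invertible). Hence $\mathcal G$ is spanned by words in $\Theta^{\pm1}$ and $\Xi$. A word containing no $\Xi$ lies in $\Bbbk[\Theta,\Theta^{-1}]$. A word containing $r\geq1$ copies of $\Xi$ has the form $\Theta^{a}\,(\Xi\Theta^{k_1}\Xi\cdots\Theta^{k_{r-1}}\Xi)\,\Theta^{b}$. Therefore $\mathcal G=\Bbbk[\Theta^{\pm1}]+\mathcal J$, and $\mathcal J^{r}$ is spanned by such words with at least $r$ copies of $\Xi$. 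By the threshold, every word with $r\geq n+1$ copies vanishes, so $\mathcal J^{n+1}=0$. The all-ones word lies in $\mathcal J^n$ and is nonzero, so $\mathcal J^n\neq0$. Being a nilpotent ideal, $\mathcal J$ is contained in $\operatorname{rad}\mathcal G$. Conversely, $\mathcal G/\mathcal J$ is a quotient of the commutative algebra $\Bbbk[\Theta,\Theta^{-1}]$, which is semisimple by (H4). A quotient of a semisimple algebra is semisimple, so $\operatorname{rad}\mathcal G\subseteq\mathcal J$. Finally, since $\mathcal J^n\neq0$ and $\mathcal J^{n+1}=0$, Nakayama gives strict inclusions $\mathcal J^{r+1}\subsetneq\mathcal J^r$ for $r\leq n$, so there are exactly $n+1$ nonzero Loewy layers $\mathcal G/\mathcal J,\ldots,\mathcal J^n$.

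The step needing the most care is the block induction for the general word. One must check that only the $(2,1)$ entries of the $\Theta$-powers survive between consecutive $\Xi$'s, and that no stray terms from the diagonal blocks $R^{\pm k}$ leak in. I would verify this by writing the partial word as $\Xi Y$ with $Y$ having a zero second row, and checking that this shape is preserved under $Y\mapsto\Theta^{k}\Xi Y$ followed by left multiplication by $\Xi$.
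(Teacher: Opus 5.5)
Your proposal is correct and follows the paper's proof essentially step for step. You reduce every word to the single block $Dq_{k_1}(C)\cdots Dq_{k_M}(C)\sigma$, use (H1)--(H2) for the flag lowering and (H3) with invertible $\sigma$ for $D^{n-1}\sigma\neq0$, get nilpotency of $\mathcal J$ from the threshold, and obtain $\mathcal J=\operatorname{rad}\mathcal G$ from the semisimple quotient supplied by (H4). The only additions are that you spell out the block induction and the strictness $\mathcal J^{r+1}\subsetneq\mathcal J^{r}$ for $r\leq n$, both of which the paper leaves implicit.
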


\begin{proof}
By Lemma~\ref{lem:axiomatic-blocks}, the only possibly nonzero block of a
word is
\[
 Dq_{k_1}(C)Dq_{k_2}(C)\cdots Dq_{k_M}(C)\sigma.
\]
Conditions (H1) and (H2) imply that each factor $Dq_k(C)$ lowers the flag
by one, so $M\ge n$ gives zero.  Since $q_1=1$, the all-ones word has upper
right block $D^{n-1}\sigma\ne0$ by (H3) and invertibility of $\sigma$.
Every element of $\mathcal G$ is a linear combination of words in $\Theta$
and $\Xi$.  Expanding a product of $n+1$ elements of $\mathcal J$ therefore
gives a sum of words containing at least $n+1$ copies of $\Xi$.  Combining
each intervening string of powers of $\Theta$ reduces these to the vanishing
words above (including zero exponents).  Hence $\mathcal J^{n+1}=0$,
whereas $(\Xi\Theta)^{n-1}\Xi\in\mathcal J^n$ is nonzero.
Since $\Theta$ is invertible on a finite-dimensional space, its inverse is
a polynomial in $\Theta$.  Thus $\mathcal G/\mathcal J$ is a
quotient of the semisimple algebra $\Bbbk[\Theta,\Theta^{-1}]$ by (H4), so
the standard radical criterion gives $\mathcal J=\operatorname{rad}\mathcal G$.
\end{proof}

\begin{proposition}[The root-of-unity model is an instance]\label{prop:axiomatic-instance}
For $N=2n$ and admissible $u$, the cyclic model of Sections~\ref{sec:reduction}
and~\ref{sec:flag} gives a length-$n$ Chebyshev-filtered datum with
$E=\CC^N$, $R=t$, $\sigma=s$, and $F_r=\cK_r$.  Under the invertible change of
basis $S=\left(\begin{smallmatrix}I&I\\-I&0\end{smallmatrix}\right)$, its
$\Theta,\Xi$ are respectively the conjugates of $T,\Lambda$; thus
Theorem~\ref{thm:axiomatic-main} recovers the chained-word threshold and radical
nilpotency already proved above.
\end{proposition}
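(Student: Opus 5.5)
The plan is to verify the four hypotheses of Definition~\ref{def:chebyshev-datum} one by one for $E=R\cong\CC^N$, $R=t$, $\sigma=s$ and $F_r=\cK_r$, and separately to check the conjugation claim by block multiplication. Invertibility is immediate: $t^N=\Id$, and $s=a^{-1}$ exists by Lemma~\ref{lem:parameter-domain} because $u$ is admissible. Also $n\geq2$ since $N\geq4$, and $\cK_\bullet$ is a flag from $0$ to $R$ by Lemma~\ref{lem:parity-basis}.

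For the conjugation, I take $S^{-1}=\left(\begin{smallmatrix}0&-I\\I&I\end{smallmatrix}\right)$. I will show $S^{-1}TS=\Theta$ and $S^{-1}\Lambda S=\Xi$; this is the sense of ``conjugate'' intended.
\begin{itemize}
\item First, $TS=\left(\begin{smallmatrix}t&t\\-t^{-1}&0\end{smallmatrix}\right)$, and left multiplication by $S^{-1}$ gives $\left(\begin{smallmatrix}t^{-1}&0\\t-t^{-1}&t\end{smallmatrix}\right)=\Theta$.
\item Second, write $\Lambda=\mathbf q s\mathbf r$ as in \eqref{eq:rank-one-identities}. Then $\mathbf rS=\begin{pmatrix}0&I\end{pmatrix}$ and $S^{-1}\mathbf q=\binom{I}{0}$, so $S^{-1}\Lambda S=\left(\begin{smallmatrix}0&s\\0&0\end{smallmatrix}\right)=\Xi$.
\end{itemize}
Since conjugation is an algebra isomorphism, it carries words to words, $A_{N,u}$ onto $\mathcal G$, and $J$ onto $\mathcal J$. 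Hence Theorem~\ref{thm:axiomatic-main} transfers back to $T$ and $\Lambda$.

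The hypotheses are checked as follows.
\begin{itemize}
\item (H2): we have $D=s(t-t^{-1})=B_1$, so (H2) is exactly \eqref{eq:flag-lowering} with $k=1$.
\item (H3): this follows from the certificate identity \eqref{eq:certificate-identity}. It gives $B_1^{n-1}s\,v_N=ni^{n-1}\,1\neq0$, so $D^{n-1}\neq0$.
\item (H1): this is the one step needing a fresh computation, modeled on Proposition~\ref{prop:lowering}. By \eqref{eq:shifted-x-rho} with $k=1$, $(t+t^{-1})P(x)=P(c_1x+\eta_1a)+P(c_1x-\eta_1a)$. This expression is even in $a$, so after substituting $a^2=x^2-4$ it is a polynomial in $x$ of degree at most $\deg P$. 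The $\rho$-component picks up only the sign $(-1)^1$. Therefore $C(P(x)+\rho Q(x))$ stays in $\cK_m$ whenever $\deg P,\deg Q<m$, which is (H1).
\item (H4): $\Theta$ is conjugate to $T$, and $T$ satisfies the squarefree polynomial $X^N-1$ over $\CC$, so $T$ is diagonalizable. Hence $\CC[\Theta,\Theta^{-1}]=\CC[\Theta]\cong\CC[X]/(\text{minimal polynomial})$ is a product of copies of $\CC$, hence semisimple.
\end{itemize}

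I expect the main obstacle to be (H1). The point to check is that the even part of the shifted polynomial really does not raise degree once $a^2$ is rewritten as $x^2-4$. I will settle it by the same monomial expansion as \eqref{eq:divided-expansion}, now using even $j$: a term $\binom rj c_1^{r-j}\eta_1^j x^{r-j}(x^2-4)^{j/2}$ has degree $r$.
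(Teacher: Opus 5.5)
Your proposal is correct and follows the paper's proof essentially step for step: $D=s(t-t^{-1})=B_1$ gives (H2), evenness in $a$ of $P(c_1x+\eta_1a)+P(c_1x-\eta_1a)$ with $a^2=x^2-4$ gives (H1), the certificate \eqref{eq:certificate-identity} gives (H3), the squarefree relation $t^N=\Id$ gives (H4), and direct multiplication with $S^{-1}$ gives the block forms. Your explicit checks $S^{-1}TS=\Theta$ and $S^{-1}\Lambda S=\Xi$ (via $\mathbf rS=\begin{pmatrix}0&I\end{pmatrix}$ and $S^{-1}\mathbf q=\binom{I}{0}$) are correct and make the paper's one-line verification explicit.
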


\begin{proof}
In the cyclic model, $C=t+t^{-1}$ and $D=s(t-t^{-1})=B_1$.  By
\eqref{eq:t-action} and \eqref{eq:shifted-x-rho}, applying $t^{\pm1}$ to
$P(x)+\rho Q(x)$ replaces the argument $x$ by $c_1x\pm\eta_1a$ and multiplies
the $\rho$-sector by $-1$.  The sum of the two substitutions is even in $a$,
hence, after $a^2=x^2-4$, again a polynomial in $x$ of no larger degree on
each parity sector.  Thus $C\cK_r\subseteq\cK_r$, which is (H1).  The lowering
condition (H2) is Proposition~\ref{prop:lowering} with $k=1$, and the
exhaustive flag is supplied by Lemma~\ref{lem:parity-basis}.
The survival condition (H3) is the nonzero
sharp certificate \eqref{eq:certificate-identity}.  Since $t^N=\Id$, the conjugate $\Theta$ has
square-free minimal polynomial over $\CC$, so (H4) holds.  Direct multiplication
with $S^{-1}=\left(\begin{smallmatrix}0&-I\\I&I\end{smallmatrix}\right)$ gives
the asserted block forms.
\end{proof}

\begin{theorem}[Necessity of the four conditions]\label{thm:axiom-independence}
Each of (H1)--(H4) is essential: for every $i$ there is a finite-dimensional
rational datum with an exhaustive flag satisfying the other three conditions
for which a conclusion of
Theorem~\ref{thm:axiomatic-main} fails.
\end{theorem}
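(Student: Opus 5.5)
The proof is by construction: for each $i\in\{1,2,3,4\}$ I will exhibit an explicit rational datum $(E,R,\sigma,F_\bullet)$ with an exhaustive flag, satisfying the three conditions other than (H$i$), together with one conclusion of Theorem~\ref{thm:axiomatic-main} that fails. Every check will go through Lemma~\ref{lem:axiomatic-blocks}. A word $\Xi\Theta^{k_1}\Xi\cdots\Theta^{k_M}\Xi$ vanishes exactly when $Dq_{k_1}(C)\cdots Dq_{k_M}(C)\sigma$ does, so each witness reduces to a small matrix computation in $\operatorname{End}(E)$. I will take $n=2$ wherever possible, so that $q_0=0$, $q_1=1$, $q_2=X$ and $q_3=X^2-1$ are the only polynomials needed.

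\emph{(H3) and (H2).} Without (H3) I take $E=\Bbbk^2$ with the standard flag and $R=\sigma=I$. Then $C=2I$ preserves the flag, $D=0$ lowers it, and $\Theta=I$, so $\Bbbk[\Theta^{\pm1}]=\Bbbk$ is semisimple. However, the all-ones word at $M=n-1$ is $D\sigma=0$, so the survival conclusion fails. Without (H2) I take $E=\Bbbk^2$ with the standard flag, $R=2I$ and $\sigma=I$. Then $C=\tfrac52 I$ preserves every subspace and $D=\tfrac32 I$, so $D^{n-1}\neq0$. The operator $\Theta$ is lower triangular with the distinct eigenvalues $\tfrac12$ and $2$, so its minimal polynomial $(X-\tfrac12)(X-2)$ is square-free and (H4) holds. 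Every word now equals $(\tfrac32)^M\sigma\neq0$, so the vanishing for $M\ge n$ fails.

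\emph{(H1).} Here I need $D$ to lower the flag while some $Dq_k(C)$ does not. With $n=2$ the aim is $DCD\sigma\neq0$, so that the word $\Xi\Theta^2\Xi\Theta\Xi$ at $M=2$ survives. Two dimensions do not suffice. Since $D$ is nilpotent and $\sigma$ is invertible, $R-R^{-1}$ must be singular, which forces the eigenline of $R$ at $\pm1$ to equal $\ker D\supseteq F_1$, and that line is then $C$-stable. I will therefore use $E=\mathbb Q^3$ with flag $0\subset F_1\subset E$ and diagonal $R=\operatorname{diag}(1,2,3)$ in a basis $v_1,v_2,v_3$, so (H4) holds because $\Theta$ has square-free minimal polynomial. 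I will choose $F_1$ to be a plane containing $v_1$ but no $C$-stable line other than $\Bbbk v_1$; for instance $F_1=\Span\{v_1,v_2+v_3\}$. Then I will pick $\sigma$ invertible so that $D=\sigma(R-R^{-1})$ maps $E$ into $F_1$, kills $F_1$, and is nonzero. Since $\ker(R-R^{-1})=\Bbbk v_1$, this requires $\sigma(R-R^{-1})(v_2+v_3)=0$, which I will arrange by solving a small linear system for $\sigma$ and then completing it to an invertible matrix. Because $C$ does not preserve $F_1$, there is a vector $f\in F_1$ with $Cf\notin F_1$, and I expect $DCD\sigma\neq0$ after a direct check. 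Finding such parameters is the step I expect to be the main obstacle, and I would settle it by a short rational search over $\sigma$.

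\emph{(H4).} I will take $\operatorname{char}\Bbbk=0$, $E=\Bbbk^n$ with the standard flag, $\sigma=I$, and $R=I+\mathsf N$ with $\mathsf N$ the lowering shift. Then $C=2I+\mathsf N^2-\cdots$ is a polynomial in $\mathsf N$ with no linear term, so it preserves the flag. Likewise $D=R-R^{-1}=2\mathsf N+(\text{higher powers of }\mathsf N)$ lowers the flag, and $D^{n-1}=2^{n-1}\mathsf N^{n-1}\neq0$. On the other hand $\Theta$ is unipotent and not the identity, so (H4) fails. The conclusion to refute is $\mathcal J=\operatorname{rad}\mathcal G$. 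I will show that $\Theta-I$, which is nilpotent, does not lie in $\mathcal J$; then $\mathcal G/\mathcal J$ contains a nonzero nilpotent element and is not semisimple. To prove $\Theta-I\notin\mathcal J$, I would compute for $n=2$ the finite spans of $\mathcal G$ and $\mathcal J$ directly, or find a linear functional that vanishes on every word containing $\Xi$ but not on $\Theta-I$, for example a suitable block entry.
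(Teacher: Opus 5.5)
\textbf{Verdict.} Your witnesses for (H2) and (H3) are valid, and a little simpler than the paper's. Your (H4) witness is essentially the paper's, including the lower-left block as the functional that separates $\Theta-I$ from $\mathcal J$. The (H1) case, however, has a genuine gap, and a search over $\sigma$ cannot close it: no witness with $n=2$ exists.

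\textbf{Your specific choice is inconsistent.} Since $\sigma$ is invertible, $\ker D=\ker(R-R^{-1})=\mathbb Q v_1$ is a line. So $D$ cannot annihilate the plane $F_1=\Span\{v_1,v_2+v_3\}$. The equation you intend to solve, $\sigma(R-R^{-1})(v_2+v_3)=\sigma(\tfrac32v_2+\tfrac83v_3)=0$, has no invertible solution $\sigma$.

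\textbf{The obstruction is general.} For $n=2$, condition (H2) says exactly $\operatorname{im}D\subseteq F_1\subseteq\ker D$. Moreover $\ker D=\ker(R-R^{-1})=\ker(R^2-I)$ is stable under every $q_k(C)$, because $C$ commutes with $R$. Hence $Dq_k(C)D=0$ for all $k$, and every word with $M\geq2$ vanishes. All conclusions of Theorem~\ref{thm:axiomatic-main} then follow from (H2)--(H4) alone. Your two-dimensional observation, that $F_1$ is forced to be $C$-stable, is a special case of this. The bottom step of the flag always lies in the $C$-stable space $\ker D$, so a failure of (H1) can only matter at a higher step.

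\textbf{How to repair it.} Take $n=3$, put $F_1=\ker D$, and make $F_2$ the step that is not $C$-stable. This is the paper's witness:
\begin{itemize}
\item Data: $E=\mathbb Q^3$, $R=\operatorname{diag}(1,2,3)$, $v=e_2+e_3$, $F_1=\langle e_1\rangle$, $F_2=\langle e_1,v\rangle$.
\item Choose $\sigma$ so that $De_1=0$, $De_2=e_1-v$ and $De_3=v$.
\item Then (H2) holds, (H3) holds because $D^2e_3=e_1\neq0$, and (H4) holds.
\item But $DCe_3=\tfrac{10}{3}v$, and $DC$ acts on $F_2/F_1$ as multiplication by $\tfrac56$. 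So $(DC)^3e_3\neq0$.
\item Since $q_2(C)=C$, the all-twos word at $M=3$ survives.
\end{itemize}

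\textbf{Three smaller points.}
\begin{itemize}
\item For (H4) in your (H1) and (H2) witnesses, saying that $\Theta$ is lower triangular with distinct eigenvalues is not enough, since each eigenvalue is repeated. Use instead that $\Theta$ is conjugate to $\operatorname{diag}(R^{-1},R)$ by $\left(\begin{smallmatrix}I&0\\-I&I\end{smallmatrix}\right)$.
\item In the (H2) witness, only the all-ones words equal $(\tfrac32)^M\sigma$; words with a zero exponent vanish. This is harmless, since one surviving word suffices.
\item In the (H4) witness, a nonzero nilpotent element rules out semisimplicity only because $\mathcal G/\mathcal J$ is commutative, being generated by the class of $\Theta$. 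You should say so.
\end{itemize}
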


\begin{proof}
All witnesses are over $\mathbb Q$, with $F_0=0$ and $F_n=E$.

For (H1), take $n=3$, $E=\mathbb Q^3$, $R=\operatorname{diag}(1,2,3)$,
\[
\sigma=\begin{pmatrix}0&2/3&0\\1&-2/3&3/8\\-1&-2/3&3/8\end{pmatrix},
\qquad F_1=\langle e_1\rangle,\quad F_2=\langle e_1,e_2+e_3\rangle.
\]
Writing $v=e_2+e_3$, one has $De_1=0$, $De_2=e_1-v$, and $De_3=v$.
Thus $DF_2\subseteq F_1$, $DE\subseteq F_2$, and $D^2e_3=e_1\ne0$.
However, $C=\operatorname{diag}(2,5/2,10/3)$ does not preserve $F_2$.
The induced action of $DC$ on the line $F_2/F_1$ is multiplication by
$5/6$, and $DCe_3=(10/3)v$, so $(DC)^3e_3\ne0$.
Since $q_2(C)=C$, the all-twos word at $M=3$ is nonzero.
The matrix $\sigma$ is invertible.  Its first row forces the second
coordinate of a kernel vector to vanish, and subtracting the other
two row equations then forces the first and third to vanish.
Condition (H4) holds because $\Theta$ is similar to
$\operatorname{diag}(R^{-1},R)$: the conjugating matrix
$\left(\begin{smallmatrix}I&0\\-I&I\end{smallmatrix}\right)$ gives exactly
the displayed form of $\Theta$.

For (H2), take $n=2$, $E=\mathbb Q^2$, $R=\operatorname{diag}(2,3)$,
$\sigma=I$, and $F_1=\langle e_1\rangle$.  Then $C$ preserves the flag,
$D=\operatorname{diag}(3/2,8/3)$, and $D^2\ne0$.
Thus (H3) and (H4) hold, but the all-ones word at $M=2$ survives.

For (H3), take $n=3$, $E=\mathbb Q^3$, $R=\sigma=I$, and
$F_r=\langle e_1,\ldots,e_r\rangle$.  Then $C=2I$, $D=0$, and
$\Theta=I$, so the other conditions hold.  The algebra is
$\mathbb Q[I,\Xi]$ with $\Xi^2=0\ne\Xi$; its Loewy length is two,
not four, and the claimed sharp word is zero.

For (H4), take $n=2$, $E=\mathbb Q^2$, $R=I+\mathcal N$ with
$\mathcal N=\left(\begin{smallmatrix}0&1\\0&0\end{smallmatrix}\right)$,
$\sigma=\tfrac12I$, and $F_1=\ker\mathcal N$.  Here $C=2I$ and
$D=\mathcal N\ne0$, so (H1)--(H3) hold, but
the minimal polynomial of $\Theta$ is $(x-1)^2$.  Writing $H=\Theta-I$ gives
$H^2=\Xi^2=H\Xi H=0$ and
\[
 \mathcal J=\operatorname{span}_{\mathbb Q}\{\Xi,H\Xi,\Xi H,\Xi H\Xi\},
 \qquad H\notin\mathcal J.
\]
Indeed, $H=\left(\begin{smallmatrix}-\mathcal N&0\\2\mathcal N&\mathcal N\end{smallmatrix}\right)$
has nonzero lower-left block, whereas every displayed generator of
$\mathcal J$ has zero lower-left block.  The quotient is generated by the
nonzero square-zero class of $H$, so
$\operatorname{rad}(\mathcal G/\mathcal J)\ne0$ and radical identification
fails.  These witnesses prove the assertion.
\end{proof}

\begin{remark}[Flag dimensions and exhaustivity]
The proof of Theorem~\ref{thm:axiomatic-main} uses no flag-dimension formula.
In particular, the dimensions $\dim\cK_r=2r$ of the cyclic model are a
feature of that realization, not a necessary hypothesis of the abstract
threshold theorem.  Exhaustivity is different: if $F_n\ne E$, lowering
on the displayed subspaces gives no control on vectors outside $F_n$.
\end{remark}

\begin{proof}[Proof of Theorem~\ref{thm:main}]
Proposition~\ref{prop:vanishing} gives vanishing for every $M\geq n$.
The all-ones certificate \eqref{eq:certificate-identity} and
Lemma~\ref{lem:abstract-threshold} show that vanishing cannot occur
earlier, so $\wtau_N(u)=N/2$.
Proposition~\ref{prop:sharp-classification} supplies the sharp-layer
criterion and the rank-two statement.
Theorem~\ref{thm:monotone-algebra} and Corollaries~\ref{prop:radical}
and~\ref{cor:representation-type} give the remaining structural assertions.
\end{proof}

\section{Concluding remarks}

Two complementary structures drive the results.  The common flag controls
the exact length of nonzero chained words, while the spectral idempotents
organize the represented algebra by level and turn its multiplication into
monotone path composition.  This interaction explains both the threshold
$N/2$ and the full radical filtration within a single framework.  The same
presentation also exhibits a scalar center, a singular Cartan matrix, and an
explicit one-parameter family of indecomposable modules.

The parity element $\rho=z^{N/2}$ and the two endpoint levels are essential
to the even-order construction.  The corresponding odd-order problem,
the class of quantum-torus representations that yield monotone-path
algebras, and a more direct link between the radical layers and the
underlying scattering model remain open.  The arbitrary-field theorem above
shows that the threshold and Loewy conclusions hold for exhaustive flags
with arbitrary dimensions.  The root-of-unity realization supplies the
Fourier data behind the sharp-layer and representation-theoretic
refinements.

\section*{Acknowledgments}

We thank Wen-Jie Qiu, Yi-Cong Yu, and Xi-Wen Guan for a careful reading of the
manuscript and for their comments.  We also thank them for discussions of
their conjecture and of its relation to the Kaleidoscope Yang--Baxter
equation.  We acknowledge Gewu
Intelligence Lab for providing the collaborative research environment in which
this project was developed.  Zhiyuan Yao also acknowledges support from the
Strategic Priority Research Program of the Chinese Academy of Sciences under
Grant No.~XDB1680102 and from the National Natural Science Foundation of China
under Grant No.~12247101.

OpenAI GPT-5.6 Sol, OpenAI GPT-6 Astra, Anthropic Claude Opus~5.5, and
Anthropic Claude Fable~5.1 were used throughout this work: to develop and
check the arguments, to review the proofs and their hypotheses, and to
organize and typeset the manuscript in LaTeX.  The authors take full
responsibility for the content of this paper.

\bibliography{references}

\end{document}